\def\SKETCH{}
\def\EX{}
\mathchardef\shyp="2D
\newcommand{\kMAJ}[1]{#1\shyp\mathrm{SORT}}
\newcommand{\kREP}[1]{#1\shyp\REP}
\def\newpar{\vspace{-2mm}\paragraph}
\newtheorem{theorem}{Theorem}
\newtheorem{definition}{Definition}
\newtheorem{lemma}[theorem]{Lemma}
\newtheorem{proposition}[theorem]{Proposition}
\theoremstyle{remark}
\newtheorem{remark}{Remark}
\newtheorem{conjecture}{Conjecture}
\newtheorem{corollary}{Corollary}
\newtheorem{example}{Example}
\def\calX{\mathcal{X}}
\newcommand{\tup}[1]{\left\langle#1\right\rangle}
\newcommand{\ind}[1]{\left\llbracket #1 \right\rrbracket}
\DeclareMathOperator{\argmin}{argmin}
\newcommand{\N}{\mathbb N}
\newcommand{\E}{\mathbb{E}}
\newcommand{\V}{\mathbb{V}}
\renewcommand{\P}{\mathbb{P}}
\newcommand{\AR}{\mathrm{AR}}
\newcommand{\SC}{\mathrm{SC}}
\newcommand{\opt}{\mathrm{opt}}
\newcommand{\RD}{\mathrm{RD}}
\newcommand{\MAJ}{\mathrm{MAJ}}
\newcommand{\REP}{\mathrm{wSORT}}
\newcommand{\REG}{\mathrm{Regret}}
\newcommand{\iid}{\mathrm{i.i.d.}}
\newcommand{\sym}{\mathrm{anym}}
\title{Representative Committees of Peers\footnote{The authors are grateful to Nisarg Shah for valuable suggestions.\newline 
The work of Reshef Meir, Fedor Sandomirskiy, and Moshe Tennenholtz is supported by the European Research Council (ERC) under the European Union's Horizon 2020 research and innovation program (\#740435). \newline 
Fedor Sandomirskiy is partially supported by
Grant 19-01-00762 of the Russian Foundation for Basic Research, and by the Basic Research Program of the National Research University Higher School of Economics.} }
\author{Reshef Meir, Fedor Sandomirskiy, Moshe Tennenholtz}
\author{ \large Reshef Meir\thanks{Technion, Haifa (Israel).}\ \ \ %
Fedor Sandomirskiy\thanks{Technion, Haifa (Israel) and Higher School of Economics, St.Petersburg (Russia).} \ \ \ %
Moshe Tennenholtz\thanks{Technion, Haifa (Israel). }}
\date{}
\newcommand{\kibitz}[2]{\ifdefined\DRAFT{\color{#1}{#2}}\fi}
\newcommand{\rmr}[1]{\kibitz{red}{[RESHEF:#1]}}
\newcommand{\fsy}[1]{\kibitz{blue}{[FEDOR:#1]}}
\definecolor{ForestGreen}{rgb}{.13,.54,.13}
\definecolor{BrickRed}{rgb}{.80,.26,.33}
\newcommand{\new}[1]{{\color{ForestGreen}{{#1}}}}
\newcommand{\new}[1]{#1}
\begin{document}

\maketitle

\fsy{Do you like the title? Maybe ``Selecting representative committees of peers to cope with misty future''}\rmr{not so much, but maybe add ``sortition" somewhere?}\fsy{Representative Committees of Peers via Sortition?}

\begin{abstract}
A population of voters must elect representatives among themselves to decide on a sequence of possibly unforeseen binary issues. Voters care only about the final decision, not the elected representatives. The disutility of a voter is proportional to the fraction of issues, where his preferences disagree with the decision.

While an issue-by-issue vote by all voters would maximize the social welfare, we are interested in how well the preferences of the population can be approximated by a small committee.

We show that a $k$-sortition (a random committee of $k$ voters with the majority vote within the committee) leads to an outcome within the factor $1+O\left(\frac{1}{\sqrt{k}}\right)$ of the optimal social cost
for any number of voters $n$, any number of issues $m$, and any preference profile.

For a small number of issues $m$, the social cost can be made even closer to optimal by delegation procedures that weigh committee members according to their number of followers. However, for large $m$, we demonstrate that the $k$-sortition is the worst-case optimal rule within a broad family of committee-based rules that take into account metric information about the preference profile of the whole population.
\end{abstract}

\fsy{This are dummy Intro \& Abstract. Please, rewrite/improve! Some important references are missing.}

\section{Introduction}\label{sec_intro}
How well do committees represent preferences of the underlying population? How to select committees optimally? 
This paper aims to answer both questions  taking the design perspective on committee-selection procedure and the choice of inter-committee voting rule.

The mainstream social-choice literature
considers preferences on the set of  candidates to be the primitive of the model. In contrast, we assume that voters have preferences on \emph{ decisions} of the committee. 
This richer structure provides a
natural way to assess how well the committee represents electorate's preferences.\rmr{I think this whole temporal aspect is confusing.  I suggest something like: ``We consider a model where voters ultimately care about the decisions taken, and where electing representatives in only means to get to this goal"}\fsy{See my comment below on this page.}

We consider a population that is going to face a sequence of binary issues to decide on  (e.g., academics select a committee to decide whether to run a new journal, where to organize the next workshop, whether to conduct it online because of the pandemic of COVID-19). Once a new issue emerges, each voter can possibly form his preference about the best alternative. This allows us to consider a ``socially optimal" alternative that would be selected at the whole-population referendum, had such a referendum been conducted.\footnote{We assume that an individual's preferences of issues are independent, or \emph{separable}.} However, engaging in a frequent referenda may be a heavy burden for the population. 

This is part of the reason why historically, most of the world has moved from direct democracy to some form of representative democracy (by committees or parliaments), whereas more recently there is much interest in increased public participation via technological means \citep{dalton2001public,brill2019interactive,allen2019cryptodemocracy}. \fsy{BETTER REFERENCES?}\rmr{anything but blockchain?}

We follow the representative democracy approach in this paper, assuming there is a strict limit on the number of representatives that can actively vote on all issues. We then ask if and how a representative committee can be selected from the general population.

We evaluate how well the committee aggregates preferences of a given population by its social cost, where the cost of a voter is proportional to the number of issues on which he disagrees with the final decision. 
Following the standard worst-case approach in approximate mechanism design, 
the approximation ratio is defined as the worst-case ratio between the (expected) social cost of the elected committee and the optimal social cost. The latter is obtained by an issue-by-issue majority vote of the entire population.
We impose no restriction on preference profiles of the population.

The simplest way to select a committee is to select $k$ individuals  uniformly at random from the population. This way is known as ``sortition" and it has been discussed---although not frequently used---since ancient times~\cite{dowlen2017political}.

Note that we did not say explicitly \emph{how} the selected committee is supposed to make a decision. Indeed, a natural approach is to use a majority vote inside the committee but we may also consider other alternatives. 

Our main goal in this paper is to understand the best approximation ratio that can be obtained by a committee, and what selection rules and/or internal voting rules are optimal.


\rmr{I think that the next paragraph is mainly confusing} \fsy{I am sure this paragraph and the whole temporal interpretation would be critical if we submitted to the econ journal. I have much less intuition about CS/AI venues. So if you think we don't get credit for it, feel free to delete.}\rmr{for a journal we can have a broader discussion on the temporal aspects. For now I think simplicity is the key.}

\subsection{Our results} 
Our main result is a characterization of the approximation ratio of $k$-sortition:
For a committee of size $k=3$ the ratio is equal to $1.316$ (Example~\ref{ex_3_MAJ}) and behaves as $1+\Theta\left(\frac{1}{\sqrt{k}}\right)$, when $k$ increases (Theorem~\ref{th_approx_ratio_arbitrary_k}). As a corollary of this result, we infer that the optimal committee size $k$ for a population of size $n$ is of the order of $n^\frac{2}{3}$ given fixed preference-elicitation costs (Corollary~\ref{cor_opt_size}).

Can we do better than $k$-sortition?
It turns out that for a small number of issues $m$, the approximation ratio can be made exponentially close to $1$ in $k$ by weighing each committee member according to the fraction of the electorate it best represents (Proposition~\ref{prop_delegation_small_m}). However, this improvement  disappears in a realistic scenario of large number of issues $m$, moreover, the suggested average-proximity weighing becomes harmful (Proposition~\ref{prop_delegation_is_bad}).

We complete the results above by showing that without further restrictions,
$k$-sortition is the best rule among a broad family of committee-selection and inter-committee voting rules (Theorem~\ref{theorem_general}).

To summarize, in unstructured environments, $k$-sortition is a compelling choice for representing preferences of the electorate.
 \fsy{``Of course, it has certain limitations, which we discuss at the end of the paper.'' Either delete this phrase or discuss in Concluding section}
\paragraph{Structure of the paper}
The model is introduced in Section~\ref{sec_model}. We analyze the benchmark rule, $k$-sortition in Section~\ref{sect_uniform_majority}. Section~\ref{sec_delegation} is devoted to improving the outcome of $k$-sortition for small number of issues by using additional information about the proximity of committee members and voters. In Section~\ref{sec_general}, we show that $k$-sortition is worst-case optimal within a broad family of rules. Limitations of the model and remaining questions are discussed in Section~\ref{sec_conclusions}. Most of the proofs are only sketched in the main body of the paper; all the details can be found in Appendices~\ref{app_proofs_unif_maj},~\ref{app_proofs_delegation}, and~\ref{app_proofs_general}.

\subsection{Related literature}
Finding good voting rules for committee-selection and even formalizing what is ``good'' is still a big challenge for the social-choice literature, see a recent survey~\citep{faliszewski2017multiwinner}; this literature takes a normative approach and aims to characterize voting rules by the combination of desired properties known as axioms.  Preferences over candidates are considered to be the primitive of the model despite that, as argued by political scientists, preferences on candidates originate from the electorate's preferences on future decisions that a candidate would take if elected~\citep{austen1988elections}. This limitation does not allow to capture the objective of selecting committees that optimally represent electorate's preferences on future policies.

The idea of selecting representatives as a small random sample from the electorate has its roots in Athenian democracy (see e.g.~\citep{hansen1999athenian}).  
Recently this idea gained popularity \citep{dowlen2017political,cheng2017people, cheng2018distortion, guerrero2014against} for its fairness, representativness, and high barriers to various manipulations including vote-buying \citep{parkes2017thwarting, jamroga2019risk, gersbach2017sophisticated}. It was also deployed by at least one startup \citep[\href{https://rsvoting.org/}{https://rsvoting.org/}]{chaum2016random}. Preliminary experimental data suggests that random-sample voting is suitable for political elections, see \citep{fishkin2018deliberative} and \citep[Chapter 7]{blanchard2019usability}.  Trustworthy and secure large-scale implementation of these ideas leads to cryptographic challenges \citep{lenstra2015random, basin2018alethea}.
We emphasize that in this work fairness (in the sense of representation for minorities) is not one of our goals, as the selected committee is only an intermediate step for a  final decision that applies to the entire population.


One stream of the literature on multiple referenda deals with the dependency among issues in voters' preferences (see e.g.~\cite{lang2016voting}); or with restrictions on the valid outcomes, as in \emph{judgement aggregation}~\cite{pauly2006logical}. We follow a different stream of the literature, making the simplifying assumption of that the full preferences of each voter are fully captured by her position in some metric space (in our case, the binary cube with the Hamming distance)~\cite{border1983straightforward,procaccia2009approximate,meir2012algorithms,goel2017metric, anshelevich2018approximating}.  

Measuring the performance of a voting rule by its approximation ratio (or distortion, which is an analogue of the approximation ratio for a given set of candidates) was suggested in~\citep{procaccia2006distortion}. It was later applied by \citet{procaccia2009approximate} for 
 facility-location in metric spaces and became a gold standard in economic-design literature.
In this line of papers, the optimal outcome is compared to the  best possible outcome subject to some constraint on the voting rule (e.g. that it is strategyproof, or uses only ordinal information). Indeed, in the special case of a singleton ($k=1$),  $k$-sortition boils down to the familiar \emph{random dictator} voting rule, whose approximation ratio for large populations is $2$. 


\citet{cheng2018distortion}  also consider voting on a metric space, where  candidates are uniformly sampled from the set of voters, and characterize the class of scoring rules having bounded distortion.  Our paper is different in many aspects: it analyzes the decisions of the elected committee, allows for fairly general committee-selection procedure, e.g., the sampling procedure may depend on preferences of the electorate, and achieves the approximation-ratio close to $1$.\rmr{I still need to better understand this paper}

\fsy{Shall we use the term "distortion" instead of the approximation ratio? In our model they are the same}

\medskip
The fact that the approximation ratio of $k$-sortition goes to $1$ for large committee size, can be regarded as an extension of the famous Condorcet Jury theorem, which claims that if each voter receives a noisy signal about the ``right'' alternative, the majority vote of a large population reveals the ground truth. In this interpretation, preferences of a random committee member are seen as a noisy estimate of preference of the majority. An important difference is that the jury theorem requires the population to be far enough from a tie~\citep{paroush1998stay}, while our bounds are applicable to all preference profiles. \rmr{I wonder if we can make a more explicit comparison with CJT: Suppose we have $n$ voters each with a correct signal w.p. $p\in[0,1]$. We have the same requirement of the committee whether the signal is weak or strong.

However with a 1-issue voting, expected  social cost means that with an almost even partition of the population (=``weak signal") we also require less of the voting outcome. }



\newpar{Delegation}
A compromise between the direct and representative democracies is provided by proxy voting \citep{alger2006voting, green2015direct, cohensius2017proxy} and liquid democracy \citep{kahng2018liquid, goelz2018fluid}: each voter has an opportunity to engage in the decision-making directly but, since he may be non-motivated enough or unavailable, there is an option to specify a representative thus delegating the vote to the proxy. The idea of weighing the committee members, used in Section~\ref{sec_delegation}, is inspired by this line of research. 
Below we provide a detailed overview of several papers on proxy voting with  multi-issue setting sharing some similarity with ours.

\citet{pivato2020weighted}  consider a fixed committee with delegation to the closest representative and evaluate its performance on a sequence of i.i.d. issues. They demonstrate that in the limit of large electorate, the decision of the committee always matches the alternative preferred by the majority. This conclusion holds even for committees of size $1$ because authors impose additional strong assumptions on the alignment of electorate's and committee members' preferences.
Our approach significantly differs: we are interested in the worst-case guarantees, work with a fixed population of voters, allow them to have arbitrary preferences, and the only randomness in our model comes from the randomization performed by  voting rules.

The setting in \citep{skowron2015we} is similar to \citep{pivato2020weighted}: implicitly assumed i.i.d. issues and strong restrictions on the alignment of preferences of the committee and the electorate. The paper discusses how the inter-committee voting rule affects the optimal selection of the committee. We are interested in the optimal selection of the pair (a committee and an inter-committee voting rule), make no assumptions on the preference profile, and rely on the worst-case analysis giving the robust guarantees.

In contrast to the two aforementioned papers, \citet{abramowitz2018flexible} impose no restrictions on preferences and propose an interesting continuous way to mix direct and proxy voting by allowing voters to readjust the weights of committee members for each issue. 
In Section~\ref{sec_general}, we consider a family of rules containing the proposal of \cite{abramowitz2018flexible} and demonstrate the worst-case optimality of $k$-sortition within this family.




\section{The model}\label{sec_model}
There is a population $[n]=\{1,2,\ldots,n\}$ of $n$ voters who are going to face a sequence \rmr{``sequence" implies some temporal order. Is that the word we want?}
\fsy{Yes, I used the word sequence to emphasize the sequential nature of decisions. This interpretation helps to justify issue-wise voting within the committee and the informational parsimony w.r.t. preferences of the whole population. Do you find this confusing?}\rmr{yes, I think it is ok to use in the the motivational parts, but not in the technical part. ultimately every voter is just a binary vector.}\fsy{A binary vector $=$ a sequence of zeros and ones... So to me it does not sound confusing, but feel free to correct it. Indeed, I don't not know the specifics of our audience.}
of $m$ binary issues $[m]=\{1,2,\ldots,m\}$ to decide on. 
Preferences of each voter $i\in [n]$ are represented by a vector $x_i\in \{0,1\}^m$, where $x_{i,j}=1$ if $i$ prefers the alternative $1$ for issue $j$ and $x_{i,j}=0$ if the alternative $0$ is preferred. The preference  profile of the population is thus given by an $n\times m$--matrix $X=(x_{i,j})_{i\in [n],j\in[m]}$ of zeros and ones.

For a pair of vectors $z,z'\in\{0,1\}^m$, we define the distance $d(z,z')$ between them to be the number of issues where they disagree, i.e., the Hamming distance: $d(z,z')=\sum_{j\in [m]}|z_{j}-z_j'|.$ For a pair of voters $i,k\in[n]$, the distance between their preferences $d(x_i,x_j)$ captures the overall alignment of their tastes. 

\begin{definition}[voting rules]
A voting rule $f$ specifies the outcome vector $z\in \{0,1\}^m$  for each preference profile profile $X$.

We allow for randomization and so a voting rule maps $X$ to a probability distribution $f(X)\in \Delta\big(\{0,1\}^m\big)$ according to which $z\in \{0,1\}^m$ is then chosen. 
\end{definition} 
Here and below $\Delta(A)$ denotes the set of all probability measures on $A$.

\paragraph{Classes of committee voting rules.} We are interested in voting rules that can be represented as a two-stage procedure: first voters select a certain committee of peers $C\subset [n]$ and then the committee members vote to determine the outcome for each of the issues. The restriction of a preference profile $X$ to committee members $C\subset [n]$ is denoted by $X\vert_C$;
\new{for definiteness, we assume that the order of rows in $X\vert_C$ is the same as in $X$, i.e., the row corresponding to a committee member with lower index comes first.} 
\begin{definition}[committee voting rules]\label{def:cvr}
A committee voting rule is given by a function $g$ that maps preference profile $X$ to a probability distribution over pairs $(C,h)$, where a committee $C$ is a subset of $[n]$ and $h:\{0,1\}^{|C|\times m}\rightarrow \{0,1\}^m$ is a voting rule. The outcome of $g$ is computed as follows: first the pair $(C,h)$ is chosen with probability $g(X)$ and then the outcome-vector $z$  is obtained by applying $h(X\vert_C)$.

We say that $g$ is a $k$-committee rule if $|C|=k$ with probability $1$. 
\end{definition}
\new{Any committee rule is a voting rule, and any voting rule can be seen as a $k$-committee rule with $k=n$, i.e., the whole population plays a role of the committee. We are interested in the opposite scenario, when the committee size $k$ is small compared to the total number of voters, the total number of voters $n$.} 


Without further restrictions, 
every voting rule $r$ (even randomized) can be represented as a $1$-committee rule, where $C$ contains a single arbitrary voter and $h$ is the fixed rule $h(X|_C)\equiv r(X)$.

We therefore consider two natural restrictions on committee rules: the selection phase can only be based on pairwise distances; and in the voting phase, the committee decides on each issue separately (formal definitions below).

 Indeed, the idea behind committee rules is that they avoid elicitation of detailed preferences of large population and rely on a small number of committee-members instead;\footnote{the detailed information about preferences may be unavailable at the time of the committee selection if $[m]$ contains some unforeseen issues.}
  and the decision making process of the committee itself should be simple and transparent to maintain a clear connection between representatives and the voters they represent, and to enable voting on arbitrary issues as they come along.

\new{For a preference profile $X$, we denote by $D(X)$ the matrix of pairwise distances $D(X)=\big(d(x_i,x_j)\big)_{i,j\in[n]}$. We focus on those committee rules that base the choice of the committee and of the inter-committee rule on information encoded in $D(X)$ only.
\begin{definition}[distance-based committee rules]
A committee voting rule $g$ is \emph{distance-based} if the probability distribution $g(X)$ over pairs $(C,f)$ of a committee and an inter-committee voting rule depends only on matrix of pairwise distances $D(X)$. In other words, $g(X)=g(X')$ whenever $D(X)=D(X')$.
\end{definition}
\begin{example}[a distance-based $k$-committee rule.]\label{ex_distance-based}
Consider the following $k$-committee rule. It picks a committee $C$ that minimizes the total distance to all the voters $\sum_{i\in[n]}\min_{i'\in C} d(x_i,x_{i'})$ (if there are several such committees, it randomizes over them uniformly) and the inter-committee rule $h$ is the weighted majority vote on every issue, where the weight of a member $c$ is given by $w_c=\sum_{i\in [n]} d(x_i,x_c)$. So for issue $j$, the outcome $z_j=1$ whenever $\sum_{c\in C:\, x_{c,j}=1} w_c>\frac{1}{2}\sum_{c\in C} w_c$; in case of opposite strict inequality, $z_j=0$ and $z_j\in\{0,1\}$ is picked at random in case of a tie.
\end{example}
\begin{definition}[issue-wise voting rule]
A voting rule $f$ is issue-wise  if it treats all  issues separately i.e., for any pair of preference profiles $X,X'$ that coincide on issue $j$ and the corresponding outcome vectors $z\sim f(X)$ and $z'\sim f(X')$, we have $\P(z_j=1)=\P(z_j'=1)$. 
\end{definition}
The latter property is also known as \emph{independence of irrelevant alternatives (IIA)} (see e.g.~\citep{pauly2006logical}). \fsy{For me, the connection to IIA is not so clear. Different issues may be related one to another. For example, if the third issue is XOR of the first two, then by looking at preferences of voters on the first two issues the rule may improve the outcome on the third...}
\rmr{This is exactly IIA, see e.g. here \url{https://link.springer.com/content/pdf/10.1007/s10992-005-9011-x.pdf} :
``IIA demands that the value that an aggregation function attaches to a
proposition is dependent only on the individual values concerning that
proposition."
}
\begin{definition} \rmr{maybe call it ``committee-IIA'' (CIIA) or ``committee-issue-wise'' (CIW)?}
A committee rule $g$ is has issue-wise inter-committee vote if for any preference profile $X$ the probability distribution $g(X)$ is supported on pairs $(C,h)$ such that the inter-committee voting rule $h$ is issue-wise.
\end{definition}
All the committee rules we consider will be distance-based with issue-wise inter-committee vote. 
Note issue-wise inter-committee vote does not imply that the rule itself is issue-wise since the distribution of the pair $(C,h)$ depends on preferences of the whole population over all issues through $D(X)$. E.g., the rule from  Example~\ref{ex_distance-based} is not issue-wise but has issue-wise inter-committee vote. 
An example of a committee voting rule that violates the definition, is when $h$ approves the three issues with the largest support in $X|_C$ and rejects all others. 
}
\rmr{I added above an example showing why without the distance-based assumption, any voting rule is a 1-committee rule. I tried to think of a similar example as to why committee-IIA may be necessary to avoid trivialities. nothing so far... (maybe you are right and it is not necessary)}
\paragraph{Inefficiency of voting rules.}  Informational parsimony of distance-based committee rules may lead to selecting suboptimal outcomes. To quantify this inefficiency, we employ the utilitarian approach. We define the disutility of a voter $i$ for an outcome-vector $z\in \{0,1\}^m$ as the total number of issues, where $i$'s preferences and $z$ disagree, i.e., this disutility equals to the distance $d(x_i,z)$.

The utilitarian social cost of an outcome vector $z$ is given by the sum of disutilities, and the social cost of a voting rule $f$ on a preference profile $X$ is defined as the expected social cost of its outcome:
$$\SC(z)=\sum_{i\in [n]} d(x_i,z), \ \ \ \ \ \SC\big(f(X)\big)=\E_{z\sim f(X)} \left[SC(z)\right].$$ 
Denote by $z^{\opt}=z^{\opt}(X)$ the socially-optimal outcome, i.e., the one with the minimal cost
$$z^{\opt}=\argmin_{z\in \{0,1\}^m} \SC(z).$$
\begin{definition}
The approximation ratio of a voting rule $f$ is the worst-case ratio of  the expected social cost of its outcome $z$ and the optimal social cost
\begin{equation}\label{eq_def_AR}
\AR_{n,m}(f)=\max_{X\in \{0,1\}^{n\times m}}\left(\frac{\SC\big(f(X)\big)}{\SC\big(z^{\opt}\big)}\right) \in [1,+\infty]    
\end{equation}
The maximum is over all preference profiles $X$ with fixed numbers  $n,m$ of voters and issues. If the denominator is zero (this happens for unanimous preference profiles), the following agreement is used: $\frac{0}{0}=1$ and $\frac{C}{0}=+\infty$ for $C>0$. 

When we drop some of the subscripts $n$ and/or $m$, this means considering their worst-case values, i.e.,  $\AR_{n}=\sup_{m\in \N} \AR_{n,m}$, $\AR_{m}=\sup_{n\in \N} \AR_{n,m}$, and $\AR=\sup_{n,m\in \N} \AR_{n,m}$,
\end{definition}
\ifdefined\EX
\begin{example}[The simple majority rule $\MAJ$]\label{ex_MAJ}
The outcome $z_j$ for each issue $j$ is determined on the whole-population referendum: $z_j=0$ if $\sum_{i\in [n]} x_{i,j}< \frac{n}{2}$;  for the opposite strict inequality, $z_j$ equals $1$; and $z_j$ is $0$ or $1$ equally likely in case of a tie.
The resulting majority rule denoted by $\MAJ$ has the approximation ratio $\AR_{n,m}(\MAJ)=1$, i.e., it always select the socially-optimal outcome. Indeed, minimization of the social cost of $\SC(z)$ splits into a family of issue-wise minimization problems $\min_{z_j}\sum_{i\in [n]} |x_{i,j}-z_j|$ for each $j\in [m]$ and the minimum is achieved if $z_j$ matches the majority of $(x_{i,j})_{i\in [n]}$.
Optimality of $\MAJ$ comes at the cost of huge burden imposed on the population: to compute the outcome we may need the preferences of the whole population on all the issues.
\end{example}
\begin{example}[The random dictatorship $\RD$]\label{ex_RD}
The random dictator rule $\RD$ is a benchmark example in the voting theory. In our setting it works as follows: a voter $i\in [n]$ is selected uniformly at random and $i$'s preference dictates the outcome of each referendum, i.e., $z=x_i$. Note that $\RD$ is an example of a $1$-committee rule.

It is known that $\RD$ has the approximation ratio $\AR_n(\RD)\leq 2-\frac{2}{n}$ for voting in general metric spaces~\citep{meir2012algorithms,anshelevich2017randomized}.
Our setting can be regarded as a particular metric space $\{0,1\}^m$ with the Hamming distance $d$. However this does not improve the approximation ratio.

Let us compute the approximation ratio for the one-issue case. Without loss of generality, we can restrict our attention to profiles $X$ such that the majority prefers the alternative $0$ but not unanimously (for unanimity $\SC(z^{\opt})=\SC(\RD(X))=1$). Denote  the number of supporters of the alternative $1$ by $n_1=\sum_{i\in [n]} x_{i,1}$. By the assumption, $1\leq n_1\leq \frac{n}{2}$. The social costs are $\SC(0)=n_1$ and $SC(1)=n-n_1$. With probability $\frac{n_1}{n}$ the random dictator supports $1$ and with the complement probability he supports $0$. Therefore, 
$$\frac{\SC(\RD(X))}{\SC(z^\opt)}=\frac{\frac{n_1}{n}\cdot(n-n_1)+ \left(1-\frac{n_1}{n}\right)\cdot n_1}{n_1}=2-\frac{2 n_1}{n}.$$
The maximal value $\AR_{n,1}(\RD)=2-\frac{2}{n}$  is achieved at $n_1=1$.

A simple argument (Lemma~\ref{lm_one_issue_gives_an_upper_bound} below) implies that $\AR_{n,m}(\RD)$ does not depend on $m$ and thus $\AR_{n,m}(\RD)=2-\frac{2}{n}$ for any number of issues.
\end{example}
\fi

\section{Random committees with simple majority rule do good job}\label{sect_uniform_majority}
\rmr{Maybe just ``k-sortition is good"?}
At the end of the last section, we considered the two examples. At one extreme, we have the simple majority rule $\MAJ$, \new{which achieves a socially-optimal outcome but may require the information on preferences of the whole population of voters.}
The random dictatorship $\RD$ is on the other extreme: it is enough to know preferences of one voter only, but $\RD$ can almost double the social cost in the worst-case.

Our goal in this section is to find the middle ground between these two extremes. We construct simple rules that have the approximation ratio close to $1$ and require to learn preferences of a negligible fraction of the population.
The following family of $k$-committee rules is a natural extension of both $\MAJ$ and $\RD$. \rmr{references to similar committee rules?}\fsy{Have you seen some?}
\begin{definition}[\new{$\kMAJ{k}$: $k$-sortition}]
A committee $C\subset [n]$ of size $|C|=k$ is selected uniformly at random. The preferences of the committee members are aggregated using the simple majority rule $h=\MAJ$ (see Example~\ref{ex_MAJ}).
\rmr{tie breaking?  also perhaps we should call it k-sortition} \fsy{Tie-breaking is uniform, see Example~\ref{ex_MAJ}.}
\end{definition}
Note that for a population of $n$ voters, $\kMAJ{n}$ coincides with the simple majority rule $\MAJ$, while $\kMAJ{1}$ coincides with $\RD$.
\ifdefined\EX
\begin{example}[The approximation ratio for $\kMAJ{3}$ and one issue]\label{ex_3_MAJ}
Let's compute the approximation ratio for a random committee of size $3$ and one issue. Similarly to Example~\ref{ex_RD}, we can assume that the profile $X$ has the following structure: the majority of $n\geq 3$ voters supports the alternative $0$, and there are $1\leq n_1\leq \frac{n}{2}$ supporters of the alternative $1$. 
   We denote by $p=\frac{n_1}{n}\in \left(0,\frac{1}{2}\right]$ the probability that a random voter supports $1$.

Let $A$ be the event that the committee with $3$ members selects the suboptimal alternative $1$; $A$ occurs if and only if two or three members support $1$. The probability of this event equals to
$$\P(A)={\footnotesize \frac{\begin{pmatrix}n_1\\3\end{pmatrix}}{\begin{pmatrix}n\\3\end{pmatrix}}+\frac{\begin{pmatrix}n_1\\2\end{pmatrix}\cdot \begin{pmatrix}n-n_1\\1\end{pmatrix}}{\begin{pmatrix}n\\3\end{pmatrix}}}=
\frac{p\cdot\left(p-\frac{1}{n}\right)\left(3-2p-\frac{2}{n}\right)}{\left(1-\frac{1}{n}\right)\left(1-\frac{2}{n}\right)}.$$
The probability $\P(A)$ tends to $p^2(3-2p)$ for fixed $p$ and large $n$. For $p\leq \frac{1}{2}$ and any $n$, this limiting value provides an upper bound: $\P(A)\leq p^2(3-2p)$.

Taking into account that $\SC(0)=n_1$ and $\SC(1)=n-n_1$, we obtain
$$\frac{\SC\big(\kMAJ{3}\big)(X)}{\SC(z^\opt)}=\frac{(1-\P(A))n_1+\P(A)\cdot(n-n_1)}{n_1}=1+\frac{\P(A)\cdot (1-2p)}{p}.$$
Taking supremum over $p\in \left(0,\frac{1}{2}\right]$ results in an upper bound on the approximation ratio
\begin{equation}\label{eq_3_MAJ_bound}
\AR_{n,1}\big(\kMAJ{3}\big)\leq 1+\sup_{p\in \left(0\frac{1}{2}\right]}\left[p(3-2p)(1-2p)\right]=1+\frac{7\sqrt{7}-10}{27}.
\end{equation}
The maximum is attained at $p^*=\frac{4-\sqrt{7}}{6}\approx 0.226$. It is easy to see that if we define $n_1=\big\lfloor p^*\cdot n\big\rfloor$ and let $n$ go to infinity, the ratio of the social costs converges to the upper bound in~\eqref{eq_3_MAJ_bound}. Thus
$$\AR_{m=1}\big(\kMAJ{3}\big)=1+\frac{7\sqrt{7}-10}{27}\approx 1.316.$$
\end{example}
\fi
The next simple lemma implies that the approximation ratios computed for $\RD$ and $\kMAJ{3}$ in the one-issue case remain the same for any number of issues.
\begin{lemma}\label{lm_one_issue_gives_an_upper_bound}
For any number of voters $n$ and any committee size $k\leq n$, the approximation ratio of $\kMAJ{k}$ does not depend on the number of issues $m$:
\begin{equation}\label{eq_AR_indep_of_m}
\AR_{n,m}\big(\kMAJ{k}\big)=\AR_{n,1}\big(\kMAJ{k}\big)
\end{equation}
\end{lemma}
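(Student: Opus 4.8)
The plan is to exploit the fact that both the social cost and the rule $\kMAJ{k}$ decompose issue by issue. The easy direction, $\AR_{n,m}(\kMAJ{k}) \ge \AR_{n,1}(\kMAJ{k})$, I would get by padding: given any one-issue profile $Y \in \{0,1\}^{n\times 1}$, form an $m$-issue profile $X$ by appending $m-1$ unanimous (say, all-zero) columns. On each appended issue every committee, and the whole population, agrees, so these columns contribute $0$ to both $\SC(\kMAJ{k}(X))$ and $\SC(z^{\opt}(X))$; hence the ratio for $X$ equals the ratio for $Y$ (the convention $0/0=1$ covering the case of unanimous $Y$). Taking suprema gives the inequality.

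For the reverse inequality, fix $X \in \{0,1\}^{n\times m}$ and write $x_{\cdot,j} = (x_{i,j})_{i\in[n]} \in \{0,1\}^n$ for its $j$-th column, viewed as a one-issue profile. Since $\kMAJ{k}$ picks the committee $C$ independently of $X$ and aggregates with $h=\MAJ$, which treats issues separately, the marginal distribution of the $j$-th coordinate $z_j$ of the outcome $z \sim \kMAJ{k}(X)$ depends only on $x_{\cdot,j}$ and coincides with the distribution $\kMAJ{k}$ produces when run on $x_{\cdot,j}$ alone. Using $d(x_i,z)=\sum_j |x_{i,j}-z_j|$ and linearity of expectation,
\[
\SC(\kMAJ{k}(X)) = \sum_{j\in[m]} \sum_{i\in[n]} \E\big[\,|x_{i,j}-z_j|\,\big] = \sum_{j\in[m]} \SC\big(\kMAJ{k}(x_{\cdot,j})\big).
\]
On the other hand, as observed in Example~\ref{ex_MAJ}, the socially optimal outcome is the coordinate-wise majority, so $\SC(z^{\opt}) = \sum_{j\in[m]} \opt_j$, where $\opt_j := \min_{b\in\{0,1\}} \sum_{i\in[n]} |x_{i,j}-b|$ is the optimal one-issue cost of column $j$.

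Now discard every issue $j$ with $\opt_j = 0$: such a column is unanimous, so $\SC(\kMAJ{k}(x_{\cdot,j})) = 0$ as well, and removing it changes neither sum. If all issues are discarded, then $X$ is unanimous and its ratio is $1 \le \AR_{n,1}(\kMAJ{k})$. Otherwise, by the definition of the one-issue approximation ratio, $\SC(\kMAJ{k}(x_{\cdot,j})) \le \AR_{n,1}(\kMAJ{k}) \cdot \opt_j$ for each surviving $j$; summing over $j$ and dividing by $\SC(z^{\opt}) > 0$ gives that the ratio for $X$ is at most $\AR_{n,1}(\kMAJ{k})$ (equivalently, apply the mediant inequality $\frac{\sum_j a_j}{\sum_j b_j} \le \max_j \frac{a_j}{b_j}$ for $b_j>0$). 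Taking the maximum over $X$ yields $\AR_{n,m}(\kMAJ{k}) \le \AR_{n,1}(\kMAJ{k})$, completing the proof.

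The genuinely new content is just the additive decomposition, which is immediate once one records that $\kMAJ{k}$ has issue-wise inter-committee vote and that its committee is chosen independently of preferences. The only step I would be careful about is the bookkeeping around unanimous columns and the $0/0$ convention in the definition of $\AR$.
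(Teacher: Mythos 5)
Your proof is correct and follows essentially the same route as the paper's: both directions rest on the additive decomposition of $\SC(\kMAJ{k}(X))$ and $\SC(z^{\opt}(X))$ over issues, followed by the mediant inequality. The only (cosmetic) difference is that for the lower bound you pad with unanimous all-zero columns whereas the paper clones the worst one-issue profile $m$ times; your explicit bookkeeping of the $\opt_j=0$ columns is in fact slightly more careful than the paper's.
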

The lemma is proved in Appendix~\ref{app_proofs_unif_maj}; here we present a sketch.
\ifdefined\SKETCH
\begin{proof}[Proof sketch of Lemma~\ref{lm_one_issue_gives_an_upper_bound}]
The left-hand side of~\eqref{eq_AR_indep_of_m} is bounded by the right-hand side since any worst-case profile with one issue can be cloned $m$ times. To prove the opposite inequality, we observe that the worst-case profile with $m$ issues cannot be worse than its restriction to the worst issue.
\end{proof}
\fi
\begin{corollary}
For a random committee of size $k=3$, the approximation ratio $\AR_{m}\big(\kMAJ{3}\big)\approx 1.316$ for any number of issues $m$.
\end{corollary}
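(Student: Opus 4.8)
The corollary is an immediate consequence of the two facts established just above it: Lemma~\ref{lm_one_issue_gives_an_upper_bound}, which says that the approximation ratio of $\kMAJ{3}$ on a population of $n$ voters is the same whether there is one issue or $m$ of them, and Example~\ref{ex_3_MAJ}, which evaluates the one-issue ratio (in the worst case over $n$) to be $1+\frac{7\sqrt{7}-10}{27}\approx 1.316$. The plan is simply to chain these together, being a little careful about the order in which the supremum over $n$ and the worst case over preference profiles are taken.

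First I would fix the number of voters $n\geq 3$ and apply Lemma~\ref{lm_one_issue_gives_an_upper_bound} to get $\AR_{n,m}\big(\kMAJ{3}\big)=\AR_{n,1}\big(\kMAJ{3}\big)$ for every $m$. Taking the supremum over $n$ on both sides and using the definition $\AR_m=\sup_{n}\AR_{n,m}$ yields $\AR_m\big(\kMAJ{3}\big)=\sup_{n\geq 3}\AR_{n,1}\big(\kMAJ{3}\big)=\AR_{m=1}\big(\kMAJ{3}\big)$, a quantity that no longer depends on $m$. Substituting the value computed in Example~\ref{ex_3_MAJ} gives $\AR_m\big(\kMAJ{3}\big)=1+\frac{7\sqrt{7}-10}{27}\approx 1.316$ for every number of issues $m$.

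There is no genuine obstacle at this stage: all the content sits in Lemma~\ref{lm_one_issue_gives_an_upper_bound} (cloning a worst single-issue profile $m$ times for one direction, restricting an $m$-issue profile to its worst issue for the other) and in the one-variable optimization carried out in Example~\ref{ex_3_MAJ} (maximizing $p(3-2p)(1-2p)$ over $\left(0,\tfrac12\right]$, with maximizer $p^{*}=\frac{4-\sqrt{7}}{6}$). The only point worth noting is that Example~\ref{ex_3_MAJ} attains its bound only in the limit $n\to\infty$, so the identity $\AR_m\big(\kMAJ{3}\big)=\AR_{m=1}\big(\kMAJ{3}\big)$ should be read as an equality in $[1,+\infty]$, with the stated numerical value being a supremum approached for large populations rather than realized by any finite $n$.
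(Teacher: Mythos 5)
Your proposal is correct and is exactly the argument the paper intends: the corollary is stated as an immediate consequence of Lemma~\ref{lm_one_issue_gives_an_upper_bound} (issue-independence of the ratio) combined with the value $1+\frac{7\sqrt{7}-10}{27}$ computed in Example~\ref{ex_3_MAJ}, with the supremum over $n$ taken afterwards just as you do. Your remark that the bound is only attained in the limit $n\to\infty$ is a correct reading of the example and does not affect the conclusion.
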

We see that passing from a random dictator to a random $3$-committee significantly improves the approximation ratio. The following theorem covers the case of committees of arbitrary size $k$ and demonstrates that the approximation ratio converges to $1$ fast with the growth of $k$.
\begin{theorem}\label{th_approx_ratio_arbitrary_k}
For any number of voters $n$,  committee size $k\leq n$, and number of issues $m$, the approximation ratio of $\kMAJ{k}$ 
enjoys the following upper bound
\begin{equation}\label{eq_k_MAJ_upper}
    \AR_{n,m}\big(\kMAJ{k}\big)\leq 1+\frac{6\exp\left(-\frac{1}{2}\right)}{\sqrt{k}}\approx 1+\frac{3.639}{\sqrt{k}}.
\end{equation}

This upper bound has the right order of magnitude as a function of $k$. The lower bound is
\begin{equation}\label{eq_k_MAJ_lower}
\AR_{n,m}\big(\kMAJ{k}\big)\geq 1+\frac{2\cdot\left( \Phi\left(-1\right)-\frac{1}{\sqrt{k}}\right)}{\sqrt{k}},
\end{equation}
\new{provided that $2(k+1)^2\leq n$.} Here
 $\Phi(t)=\frac{1}{\sqrt{2\pi}}\int_{-\infty}^{t} \exp\left(-\frac{y^2}{2}\right)\, dy$ is the standard Gaussian cumulative distribution function.
\end{theorem}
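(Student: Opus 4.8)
The plan is to reduce both bounds to a one-dimensional concentration estimate. By Lemma~\ref{lm_one_issue_gives_an_upper_bound} it suffices to work with $m=1$, and exactly as in Examples~\ref{ex_RD} and~\ref{ex_3_MAJ} we may restrict to profiles in which $n_1\le n/2$ voters support alternative $1$; writing $p=n_1/n\in(0,1/2]$ and letting $A$ be the event that the majority of the random committee picks the suboptimal alternative $1$, one has the identity $\SC\big(\kMAJ{k}(X)\big)/\SC\big(z^{\opt}\big)=1+\P(A)\cdot\frac{1-2p}{p}$. Hence everything reduces to estimating $\sup_{p\in(0,1/2]}\P(A)\frac{1-2p}{p}$, where $\P(A)=\P(S\ge k/2)+\tfrac12\P(S=k/2)\,\mathbf 1[k\text{ even}]$ and $S$ is the number of $1$-supporters in a uniformly random $k$-subset of $[n]$, so $S$ is hypergeometric with mean $kp$.

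For the upper bound~\eqref{eq_k_MAJ_upper} the main tool is the multiplicative Chernoff / relative-entropy tail inequality, which holds for the hypergeometric $S$ just as it does for a binomial (Hoeffding): $\P(A)\le\P(S\ge k/2)\le\big(4p(1-p)\big)^{k/2}$. It then remains to prove $\psi(p):=\big(4p(1-p)\big)^{k/2}\,\frac{1-2p}{p}\le 6e^{-1/2}/\sqrt k$ on $(0,1/2]$. I would split at $p=1/4$. For $p\ge1/4$, substitute $u=\tfrac12-p\le\tfrac14$ and use $1-4u^2\le e^{-4u^2}$ together with $\tfrac1{1/2-u}\le4$ to get $\psi\le 8u\,e^{-2ku^2}$, whose maximum over $u\ge0$ equals $\tfrac{4}{\sqrt k}e^{-1/2}$. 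For $p\le1/4$, a short log-derivative computation shows $\psi$ is increasing on $(0,1/4]$ once $k\ge6$, so $\psi\le\psi(1/4)=2\,(3/4)^{k/2}\le 6e^{-1/2}/\sqrt k$; the finitely many cases $k\in\{2,3,4,5\}$ are verified directly, and $k=1$ is handled separately (there $\P(A)=p$ exactly and $\psi\equiv1-2p\le1$). The comfortable constant $6$ is precisely what lets these crude estimates go through.

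For the lower bound~\eqref{eq_k_MAJ_lower} I would exhibit the profile with $n_1=\big\lfloor\big(\tfrac12-\tfrac1{2\sqrt{k+1}}\big)n\big\rfloor$, so that $p\approx\tfrac12-\tfrac1{2\sqrt{k+1}}$ makes the standardized gap $\frac{k/2-kp}{\sqrt{kp(1-p)}}$ equal to $1$; this is the algebraic role of $\sqrt{k+1}$, and the hypothesis $2(k+1)^2\le n$ guarantees $1\le n_1<n/2$ while keeping the finite-population correction $\frac{n-k}{n-1}$ close to $1$ and the rounding of $n_1$ negligible. A Berry--Esseen estimate for the hypergeometric distribution then gives $\P(A)\ge\P(S\ge k/2)\ge\Phi(-1)-O(1/\sqrt k)$, whereas $\frac{1-2p}{p}=\frac{2}{\sqrt{k+1}-1}\ge\frac{2}{\sqrt k}$; multiplying and absorbing the lower-order terms into the $-1/\sqrt k$ correction yields~\eqref{eq_k_MAJ_lower}.

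The principal obstacle is the lower bound. On the upper side the hypergeometric is dominated by the binomial and a single Chernoff estimate with a deliberately loose constant suffices (the only subtlety being the small-$p$ regime, where the additive Hoeffding bound is useless and the relative-entropy form is essential). On the lower side, however, the hypergeometric is \emph{more} concentrated than the binomial, so one cannot simply pass to a binomial; a genuine two-sided normal approximation with an explicit $O(1/\sqrt k)$ error term is needed, and one has to track carefully how this error, the continuity correction, the finite-population factor, and the integrality of $n_1$ interact under the assumption $2(k+1)^2\le n$ so that the net error stays below the target.
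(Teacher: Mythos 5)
Your overall architecture matches the paper's: reduce to $m=1$ via Lemma~\ref{lm_one_issue_gives_an_upper_bound}, write the cost ratio as $1+\P(A)\cdot\frac{1-2p}{p}$ with a hypergeometric committee count, and prove each bound by a tail estimate near $p=\frac12-\Theta(1/\sqrt k)$. On the upper bound your route is genuinely different and arguably cleaner: you use the single relative-entropy Chernoff bound $\P(S\ge k/2)\le\big(4p(1-p)\big)^{k/2}$ (valid for the hypergeometric by Hoeffding's argument) and split at $p=1/4$ only for the calculus, whereas the paper needs two different probabilistic tools --- Chebyshev for $p\le 1/6$ (where the additive Hoeffding bound is killed by the $1/p$ factor) and the additive Hoeffding bound $\exp(-\frac12(1-2p)^2k)$ for $p\ge 1/6$, glued by comparing $5/k$ with $6e^{-1/2}/\sqrt k$. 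Your estimate $8u\,e^{-2ku^2}\le\frac{4}{\sqrt k}e^{-1/2}$ for $p\ge 1/4$ and the monotonicity claim for $p\le 1/4$, $k\ge 6$ both check out, so this half is complete modulo the finitely many small-$k$ verifications you defer.

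On the lower bound there is a genuine gap, and it is exactly the step you flag as the principal obstacle: you need $\P(S>k/2)\ge\Phi(-1)-O(1/\sqrt k)$ for the hypergeometric with an explicit constant, and you do not supply such a quantitative CLT. Moreover, your remark that one ``cannot simply pass to a binomial'' because the hypergeometric is more concentrated is precisely what the paper circumvents: conditional on any history of draws, the next draw is a success with probability at least $\frac{n_1-k}{n}=p-\frac{k}{n}$, so the hypergeometric count stochastically dominates $\mathrm{Bin}\!\left(k,p-\frac{k}{n}\right)$; the standard i.i.d.\ Berry--Esseen bound (error at most $\frac{1}{2\sqrt k}$) is then applied to that binomial, and the $\frac kn$ shift is absorbed by choosing $n_1=\big\lceil(\frac12-\frac{1}{2\sqrt k}+\frac kn)n\big\rceil$ --- this is where $2(k+1)^2\le n$ enters. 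Your route could be closed by citing a Berry--Esseen theorem for the hypergeometric with explicit constants, but as written the lower bound is not proved. (Minor: for the lower bound you want $\P(A)\ge\P(S>k/2)$, not $\P(A)\ge\P(S\ge k/2)$; with uniform tie-breaking the latter inequality points the wrong way.)
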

The proof is rather long because of technicalities and is postponed to Appendix~\ref{app_proofs_unif_maj}; here we sketch the main ideas.
\ifdefined\SKETCH
\begin{proof}[Proof sketch of Theorem~\ref{th_approx_ratio_arbitrary_k}]
Lemma~\ref{lm_one_issue_gives_an_upper_bound} allows to focus on one-issue case. Similarly to Example~\ref{ex_3_MAJ}, we aim to find the worst-case profile by maximizing over the fraction $p\in \left(0,\frac{1}{2}\right]$ of voters supporting the alternative $1$. Given $n$, $p$, and $k$, the number of supporters of the alternative $1$ in a random committee has the hypergeometric distribution with parameters $(n,p\cdot n, k)$. Despite hyper-geometric random variable is not a sum of i.i.d. contributions, it satisfies the Hoeffding tail-bound, familiar by the i.i.d. case. This bound and optimization over $p$ leads to upper bound~\eqref{eq_k_MAJ_upper}. Note that the worst-case $p=\frac{1}{2}-\Theta\left(\frac{1}{\sqrt{k}}\right)$; however, it takes additional work to exclude small $p$ because the denominator in the approximation ratio~\eqref{eq_def_AR} vanishes as $p\to 0$.

For the lower bound, we approximate the hypergeometric distribution by the sum of i.i.d. Bernoulli random variables and apply the Central Limit Theorem to this sum (we use the so-called Berry–Esseen version of the CLT, which provides an upper bound on the error term). As a result, we get a statement even stronger than the lower bound~\eqref{eq_k_MAJ_lower}: this lower bound holds for $\AR_{n,m}\big(\kMAJ{k}\big)$ if that $n$ is big enough compared to $k$, namely,  $2(k+1)^2\leq n$. \rmr{we can add this to the theorem statement. Cleaner I think}\fsy{Done}
\end{proof}
\fi

\ifdefined\EX
\begin{corollary}[Asymptotic behavior]\label{cor_asymp_MAJ}
Combining upper and lower bound we see that $\AR_m\big(\kMAJ{k}\big)=1+\Theta\left(\frac{1}{\sqrt{k}}\right)$. Proof of the theorem demonstrates that  worst-case preference profiles for large $k$ are those, where both alternatives have approximately equal support with the imbalance created by $\Theta\left(\frac{1}{\sqrt{k}}\right)$-fraction of voters.\rmr{worth discussing that in the hard region of CJT (close to $0.5$), the gain from additional voters is also square root - success probability is something like $0.5+\Theta(\sqrt{k})$} \fsy{Could you phrase it?}
\end{corollary}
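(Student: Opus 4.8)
The plan is to read off both halves of the $\Theta$ estimate directly from the two bounds of Theorem~\ref{th_approx_ratio_arbitrary_k}, keeping in mind that $\AR_m\big(\kMAJ{k}\big)=\sup_{n\in\N}\AR_{n,m}\big(\kMAJ{k}\big)$ is a supremum over the population size. For the $O$ direction, inequality~\eqref{eq_k_MAJ_upper} holds for \emph{every} admissible $n$, so taking the supremum over $n$ preserves it and yields $\AR_m\big(\kMAJ{k}\big)\le 1+6e^{-1/2}/\sqrt{k}$. For the $\Omega$ direction, inequality~\eqref{eq_k_MAJ_lower} is only asserted under the side condition $2(k+1)^2\le n$; but since we are free to evaluate the supremum at any single admissible population size, I would simply instantiate $n=2(k+1)^2$ (which satisfies $k\le n$) to obtain $\AR_m\big(\kMAJ{k}\big)\ge 1+2\big(\Phi(-1)-1/\sqrt{k}\big)/\sqrt{k}$.

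It then remains to check that both bounds have the promised form. The upper bound is visibly $1+\Theta(1/\sqrt{k})$ with constant $6e^{-1/2}\approx 3.639$. For the lower bound, the coefficient $2\big(\Phi(-1)-1/\sqrt{k}\big)$ converges to $2\Phi(-1)\approx 0.317>0$ as $k\to\infty$; in particular, once $1/\sqrt{k}<\tfrac12\Phi(-1)$ (i.e.\ for all sufficiently large $k$) this coefficient exceeds $\Phi(-1)$, so the lower bound reads $\AR_m\big(\kMAJ{k}\big)\ge 1+\Phi(-1)/\sqrt{k}$. Sandwiching,
$$1+\frac{\Phi(-1)}{\sqrt{k}}\;\le\;\AR_m\big(\kMAJ{k}\big)\;\le\;1+\frac{6e^{-1/2}}{\sqrt{k}}$$
for large $k$, which is exactly the assertion $\AR_m\big(\kMAJ{k}\big)=1+\Theta(1/\sqrt{k})$.

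Finally, for the claim about worst-case profiles I would quote the structure already exposed in the proof of Theorem~\ref{th_approx_ratio_arbitrary_k}: the maximizing fraction of supporters of the losing alternative is $p=\tfrac12-\Theta(1/\sqrt{k})$. Consequently the margin $1-2p$ between the alternatives equals $\Theta(1/\sqrt{k})$, i.e.\ each alternative is supported by a $\tfrac12\pm\Theta(1/\sqrt{k})$ fraction and the imbalance is carried by a $\Theta(1/\sqrt{k})$-fraction of the electorate. No serious obstacle arises in this corollary, as it merely repackages the theorem; the only points needing care are the bookkeeping around the side condition $2(k+1)^2\le n$ in the lower bound (handled by choosing $n$ inside the supremum) and the fact that the lower-bound coefficient is positive only for large $k$, which is harmless since $\Theta$ is an asymptotic statement.
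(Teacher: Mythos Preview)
Your argument is correct and matches the paper's intent: the corollary is an immediate consequence of Theorem~\ref{th_approx_ratio_arbitrary_k}, and you have carefully spelled out the bookkeeping (taking the supremum over $n$, instantiating $n=2(k+1)^2$ for the lower bound, and noting the coefficient $2\big(\Phi(-1)-1/\sqrt{k}\big)$ is eventually positive) that the paper leaves implicit. The description of the worst-case profiles likewise simply quotes the structure $p=\tfrac12-\Theta(1/\sqrt{k})$ already established in the theorem's proof, exactly as the paper does.
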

\begin{corollary}[Deterministic committees]
By the probabilistic-method argument, the theorem implies the existence of a deterministic committee of size $k$ such that the majority vote of its members has the approximation ratio at most $1+\frac{6\exp\left(-\frac{1}{2}\right)}{\sqrt{k}}$.
\end{corollary}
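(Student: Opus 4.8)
The plan is to run the standard probabilistic-method (averaging) argument in its simplest form: for each fixed profile, the social cost of the random committee is \emph{exactly} the average, over all $k$-subsets, of the social cost attained by the corresponding deterministic committee, so at least one committee does no worse than this average. Since Theorem~\ref{th_approx_ratio_arbitrary_k} already controls the average, this singles out a deterministic committee inheriting the same guarantee.

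Concretely, I would fix an arbitrary profile $X\in\{0,1\}^{n\times m}$ and unfold the definition of $\kMAJ{k}$. Because the committee $C$ is drawn uniformly from the $\binom{n}{k}$ subsets of size $k$ and the inner rule is $\MAJ$, linearity of expectation gives
\[
\SC\big(\kMAJ{k}(X)\big)=\frac{1}{\binom{n}{k}}\sum_{C\subset[n],\,|C|=k}\SC\big(\MAJ(X\vert_C)\big),
\]
where each term $\SC(\MAJ(X\vert_C))$ is the expected social cost of majority voting inside the \emph{fixed} committee $C$ (the only remaining randomness being the uniform tie-breaking of $\MAJ$). An average is at least its minimum, so there is a committee $C^\ast=C^\ast(X)$ with $\SC(\MAJ(X\vert_{C^\ast}))\le \SC(\kMAJ{k}(X))$. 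Combining this with the upper bound of Theorem~\ref{th_approx_ratio_arbitrary_k} yields
\[
\SC\big(\MAJ(X\vert_{C^\ast})\big)\le \SC\big(\kMAJ{k}(X)\big)\le\Big(1+\tfrac{6\exp(-1/2)}{\sqrt{k}}\Big)\,\SC\big(z^{\opt}(X)\big).
\]
As this holds for every $X$, the deterministic committee-selection rule $X\mapsto C^\ast(X)$ followed by majority voting has approximation ratio at most $1+\frac{6\exp(-1/2)}{\sqrt{k}}$, which is the claim. If one also wants to eliminate the residual tie-breaking randomness, then on each issue on which $C^\ast$ is tied I would fix the outcome to the alternative of smaller social cost; this only decreases $\SC(\MAJ(X\vert_{C^\ast}))$, so the bound is preserved and the rule becomes fully deterministic.

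The step that deserves care is the interpretation rather than the computation: the averaging inequality itself is immediate. The point worth stressing is that the selected committee $C^\ast$ must be allowed to depend on $X$. No profile-\emph{independent} committee can have bounded worst-case ratio, since an adversary may place all $k$ members on the minority side of an issue on which the remaining $n-k$ voters agree, driving the ratio up to $\Theta(n/k)$. Thus ``deterministic committee'' here means that the committee selection (and, if desired, the tie-breaking) is derandomized for each given profile, in contrast to the uniform randomization of $\kMAJ{k}$; the guarantee it inherits is exactly the worst-case-over-$X$ bound of Theorem~\ref{th_approx_ratio_arbitrary_k}.
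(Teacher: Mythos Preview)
Your argument is correct and is precisely the averaging version of the probabilistic method that the paper invokes (the paper states the corollary without proof, citing only ``the probabilistic-method argument''). Your careful discussion of why the selected committee must depend on $X$---and of how to derandomize the residual tie-breaking---goes beyond what the paper spells out and clarifies what ``deterministic committee'' can and cannot mean here.
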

\begin{remark}[Optimal size of the committee]\label{cor_opt_size}\rmr{I suggest taking this discussion out of the corollary environment}\fsy{Will remark be better? I thought we need some environment to clearly separate thus example from the rest of the paper since we are discussing here an enriched model with costs.}
Assume that eliciting preference of one voter on one issue has some fixed cost $c$ measured in the same units as social costs of a voting outcome. Let's determine the optimal committee size that minimizes the worst-case regret for large number of voters $n$. Regret is defined as the difference between the total cost of $\kMAJ{k}$ (including the elicitation cost) and the social cost of the optimal outcome
$$\REG(X)=\left[\SC\big(\kMAJ{k}(X)\big)+ c\cdot k \cdot m\right]- \SC(z^\opt(X)).$$
By Corollary~\ref{cor_asymp_MAJ},
$$\max_X \REG(X)=m \cdot n \cdot \Theta\left(\frac{1}{\sqrt{k}}\right)+c\cdot k \cdot m.$$
Minimizing the worst-case regret over $k$ is, therefore, equivalent to minimizing $\Theta\left(\frac{1}{\sqrt{k}}\right)+\frac{c}{n}\cdot k$.\footnote{A fine issue here is that the optimal social cost in the worst case increases linearly in $n$ and $m$. This follows from Lemma~\ref{lm_one_issue_gives_an_upper_bound} (for $m$) and from the fact that the worst instances are those where the minority fraction $p$ is far from $0$.}
\end{remark}
\begin{corollary}
 The optimal committee size is $k=\Theta\left(\left(\frac{n}{c}\right)^{\frac{2}{3}}\right).$
\end{corollary}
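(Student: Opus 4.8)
The plan is to optimize the expression for worst-case regret obtained in Remark~\ref{cor_opt_size}. We have already reduced the problem to minimizing, over the committee size $k$, the quantity
\begin{equation*}
\varphi(k)=\Theta\left(\frac{1}{\sqrt{k}}\right)+\frac{c}{n}\cdot k,
\end{equation*}
where the implied constants in the $\Theta$-term come from the matching upper and lower bounds of Theorem~\ref{th_approx_ratio_arbitrary_k}. So the only thing left is a one-dimensional calculus exercise: treat $k$ as a continuous variable, differentiate, and set the derivative to zero.

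Concretely, I would write $\varphi(k)=a k^{-1/2}+b k$ with $a=\Theta(1)$ and $b=c/n$, compute $\varphi'(k)=-\tfrac12 a k^{-3/2}+b$, and solve $\varphi'(k)=0$ to get $k^{3/2}=\tfrac{a}{2b}$, i.e. $k=\left(\tfrac{a}{2b}\right)^{2/3}=\Theta\!\left(\left(\tfrac{n}{c}\right)^{2/3}\right)$, since $a=\Theta(1)$ and $b=c/n$. A quick check of the second derivative (or of the sign change of $\varphi'$) confirms this critical point is a minimum, and one should note that because $a$ is only pinned down up to constants, the optimal $k$ is itself only determined up to a multiplicative constant, which is exactly what the $\Theta$ in the statement asserts.

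One minor subtlety worth addressing is integrality: $k$ must be a positive integer, so strictly speaking one rounds the continuous optimizer to the nearest integer; since $\varphi$ is smooth and convex near its minimum, rounding changes the objective only by lower-order terms and does not affect the $\Theta\!\left((n/c)^{2/3}\right)$ conclusion. A second point to keep in mind is the regime of validity: the lower bound in Theorem~\ref{th_approx_ratio_arbitrary_k} requires $2(k+1)^2\le n$, i.e. $k=O(\sqrt n)$, and indeed $(n/c)^{2/3}=O(\sqrt n)$ precisely when $c=\Omega(n^{-1/4})$ or so; in the relevant asymptotic regime (fixed $c$, large $n$) this is satisfied, so the $\Theta$-characterization of the approximation ratio genuinely applies at the optimum.

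I do not anticipate a real obstacle here — the corollary is essentially immediate from Remark~\ref{cor_opt_size}, and the entire content is the elementary minimization of $a k^{-1/2}+bk$. If anything, the only thing requiring mild care is bookkeeping the constants so that the reader sees why the answer is stated as $\Theta\!\left((n/c)^{2/3}\right)$ rather than with an explicit constant, and flagging (as above) that we are in the large-$n$, fixed-elicitation-cost regime where the two-sided bound of Theorem~\ref{th_approx_ratio_arbitrary_k} is in force.
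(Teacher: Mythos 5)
Your proposal is correct and matches the paper's intended argument: the corollary is stated as an immediate consequence of the regret expression $\Theta\bigl(\frac{1}{\sqrt{k}}\bigr)+\frac{c}{n}\cdot k$ from the preceding remark, and the paper leaves the elementary minimization (yielding $k=\Theta\bigl((n/c)^{2/3}\bigr)$) implicit, which is exactly the calculus you carry out. Your additional remarks on integrality and the regime $2(k+1)^2\le n$ are sensible bookkeeping but not points the paper dwells on.
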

\rmr{it is a bit strange that here we use the difference rather than the ratio. Also you assume here that OPT scales linearly with $m$ and $n$ which is not obvious (although I see why it is true).}

 
\fi

\section{Delegation is good with few issues but harmful with many}\label{sec_delegation}

 In the previous section, we saw that the $k$-sortition achieves an approximation ratio close to $1$. A natural question to ask is whether we can push the approximation ratio even closer to $1$ by using slightly more information about preferences of the population.
 
 We first show that the approximation ratios from the previous section can be significantly improved if the number of issues $m$ is small. For this purpose, we use weighted majority voting to aggregate preferences of the committee members, where the weight of a member is determined by the fraction of voters to which this member is the ``closest'' representative~\cite{alger2006voting,cohensius2017proxy}. 

Similar approach to weighing representatives is studied in the field of proxy voting (see Section~\ref{sec_intro}). The common wisdom suggested by that strain of literature is that usually the quality of opinion-aggregation can be significantly improved by finding a clever way of transforming proximity to weights.
It turns out, that in case of large number of issues, this is not the case.

While for small number of issues, weighing makes the approximation ratio exponentially close to $1$ in the committee size $k$ (compare with $1+O(1/\sqrt{k})$ from Theorem~\ref{th_approx_ratio_arbitrary_k}), we show that for large number $m$ of issues,  delegation is harmful and the approximation ratio is bounded from below by $\frac{9}{8}$ for any size of the committee $k$.

For a committee $C\subset [n]$ and  a voter $i\in [n]$, denote by $C_i$ the subset of committee members closest to $i$, his best representatives. We assume that each voter distributes one unit of ``weight'' over his closest representatives. So the weight $w_c$ of a committee member $c\in C$ is given by  $w_c=\sum_{i\in [n]:\ c\in {C}_i} \frac{1}{|{C}_i|}.$ 
\begin{definition}[\new{$\kREP{k}$: weighted $k$-sortition}] A committee $C\subset [n]$ of size $|C|=k$ is selected uniformly at random. 

The committee decides on each issue $j\in[m]$ using weighted majority vote with weights $w_c$: if $\sum_{c\in C} x_{c,j}\cdot w_c>\frac{1}{2}\sum_{c \in C} w_c$, then the outcome is $z_j=1$; in case of the the opposite strict inequality, $z_j=0$; in case of a tie, $z_j\in\{0,1\}$ is taken equally likely.
\end{definition}
\rmr{It is not clear how this is a committee voting rule according to Def.~\ref{def:cvr}. If $f$ is weighted majority, then it relies on information outside $C$, and if $f$ is majority, then it is not clear what is the distribution over subsets, as it is not defined explicitly. We should either give an explicit construction or extend the definition.}\fsy{Added the discussion and example in the Sect 2}

\ifdefined\EX
The following simple example shows that the approximation ratio is exponentially close to $1$ in the one-issue case.
\begin{example}[$\kREP{k}$ for $1$ issue]\label{ex_exp_improvement}
Assume that $n_1$ among $n$ voters support the alternative $0$ and denote $\frac{n_1}{n}$ by $p$. Without loss of generality, $0<p<\frac{1}{2}$, i.e., the majority of voters prefers the alternative $0$ that has the social cost of $p\cdot n$.

If the preferences of all the committee members coincide with those of majority ($x_{c,1}=0,\ \forall c\in C$), then, the committee selects the optimal alternative $0$. Similarly, if $x_{c,1}=1$ for all  $c\in C$, the sub-optimal alternative $1$ with the social cost $(1-p)\cdot n$ is selected.

Consider the remaining case: both views $0$ and $1$ are presented in $C$. We check that, in this case, the decision is socially optimal for any proportion of supporters of each of the alternatives (compare with at least $\frac{k+1}{2}$ members needed without weighing!). Indeed, the total weight received by the supporters of zero alternative is $(1-p)\cdot n$ and is bigger than the weight of the supporters of one, $p\cdot n$.

Thus the committee makes the sub-optimal decision if and only if it contains no supporters of zero alternative, which happens with probability ${\footnotesize {\begin{pmatrix}n_1\\k\end{pmatrix}}/{\begin{pmatrix}n\\k\end{pmatrix}}}\leq p^k$. Therefore, the approximation ratio admits the following upper bound 
$$\AR_{n,1}\big(\kREP{k}\big)\leq \max_{p\in\left(0,\frac{1}{2}\right)}\frac{(1-p^k)\cdot p\cdot n + p^k\cdot  (1-p)\cdot n}{p\cdot n}=1+ \max_{p\in\left(0,\frac{1}{2}\right)} p^{k-1}(1-2p).$$
To find the worst-case, we take the logarithmic derivative and get $\frac{k-1}{p}=\frac{1}{\frac{1}{2}-p}$; hence, $p^*=\frac{1}{2}\frac{k-1}{k}=\frac{1}{2}-\frac{1}{2k}$. We get
$$\AR_{n,1}\big(\kREP{k}\big)\leq 1+\frac{1}{k\cdot 2^{k-1}}\left(1-\frac{1}{k}\right)^{k-1}.$$
By taking a sequence of preference profiles with growing number $n$ of voters and $\lfloor p\cdot n\rfloor$ supporters of the alternative $1$, we see that the upper bound is achieved in the limit. Thus
$$\AR_{m=1}\big(\kREP{k}\big)=1+\frac{1}{k\cdot 2^{k-1}}\left(1-\frac{1}{k}\right)^{k-1}=1+\frac{2}{e}\cdot \frac{1}{k\cdot 2^k}(1+o(1)),\quad k\to \infty.$$
We see that weighing drastically improves the approximation ratio: from $1+O\left(\frac{1}{\sqrt{k}}\right)$ of Theorem~\ref{th_approx_ratio_arbitrary_k} to $1+o\left(2^{-k}\right)$.
\end{example}
\fi
If there are several issues, weights of the committee members depend on all of them. Therefore we don't have an analog of Lemma~\ref{lm_one_issue_gives_an_upper_bound}, which allowed us to analyze the rule $\kMAJ{k}$ in the one-issue case and then extend the results in a straightforward way. In particular, for $\kREP{k}$, the approximation ratio depends on the number of issues.

Now we demonstrate that for small number of issues, delegation leads to exponential improvement as in Example~\ref{ex_exp_improvement}. To simplify the analysis we focus on preference-profiles with i.i.d. issues. 

We say that the profile $X=(x_{i,j})_{i\in [n],j\in[m]}$ has \emph{i.i.d. issues} if, when we sample a voter $i\in[n]$ uniformly at random, the random variables $(x_{i,j})_{j\in m}$ are independent and identically distributed. Informally this property means that issues are similar but unrelated: there are no logical dependencies between them or other correlations in preferences. 

We define approximation ratio for i.i.d. issues by restricting the maximization in~\eqref{eq_def_AR} to profiles with i.i.d. condition
\begin{equation*}
\AR_{n,m}^\iid(f)=\max_{\mbox{{\footnotesize i.i.d.}}\,  X\in \{0,1\}^{n\times m}} \left(\frac{\SC\big(f(X)\big)}{\SC\big(z^{\opt}\big)}\right).   
\end{equation*}
Note that $\AR_{n,m}^\iid$ is always below $ \AR_{n,m}$.
\begin{proposition}\label{prop_delegation_small_m}
The approximation ratio of $\kREP{k}$ satisfies the following upper bound for profiles with  $m$ i.i.d. issues
$$\AR_{n, m}^\iid\big(\kREP{k}\big) \leq 1+m^{m+1}\cdot\exp\left(-\frac{k}{(2m)^m}\right)$$
\end{proposition}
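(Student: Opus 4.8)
The plan is to control the probability that the weighted $k$-sortition makes a socially-suboptimal decision on at least one issue, and then bound the cost of such an error crudely by $\SC(z^{\opt})$ itself (giving the extra multiplicative factor $1+\cdot$). Fix a profile $X$ with i.i.d. issues and let $q_j\in[0,1]$ be the fraction of voters preferring the minority alternative on issue $j$; by the i.i.d. assumption all the $q_j$ are equal to a common value $q\le \tfrac12$, so $\SC(z^{\opt})=q\cdot n\cdot m$ (discarding the trivial unanimous case). The key structural observation, as in Example~\ref{ex_exp_improvement}, is that $\kREP{k}$ errs on issue $j$ only if the \emph{weighted} majority inside the committee disagrees with the true majority. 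Since each voter $i$ assigns his unit weight to his closest committee member(s), a committee member $c$ who agrees with the population majority on issue $j$ and lies within Hamming distance $r$ of many voters will tend to accumulate large weight; the sub-optimal outcome requires that the total weight on the minority side exceeds half. I would make this precise by showing that a ``good'' committee — one that contains, for each of the $2^m$ possible opinion-vectors $z$, at least one member within Hamming distance, say, $1$ of the conditional-majority profile, or more simply one that is a $\delta$-net of the population in Hamming distance — forces the weighted vote to coincide with $z^{\opt}$ on every issue simultaneously.

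Concretely, first I would partition $\{0,1\}^m$ into the (at most $2^m$) types actually present in the population, or better, group voters into the $m+1$ ``shells'' by how many issues they agree with $z^{\opt}$ on; the factor $(2m)^m$ in the exponent and $m^{m+1}$ out front strongly suggest the right object is: for each of the $2^m$ sign patterns $s\in\{\pm1\}^m$ weighted by population mass, one wants a committee member ``aligned'' with $s$. So Step 1: define a finite family $\mathcal E$ of at most $m^{m+1}$ ``bad events'', one per issue per relevant configuration, such that if none occurs the outcome is exactly $z^{\opt}$. Step 2: bound the probability of each bad event. Each such event says that a random $k$-subset misses a particular population subgroup whose relative size is at least $1/(2m)^m$ — here one needs the combinatorial claim that any population subgroup large enough to matter for the weighted vote has mass bounded below by roughly $(2m)^{-m}$, which follows because there are at most $(2m)^m$ ``cells'' when we cross-classify voters by their agreement/disagreement pattern refined to resolution needed for the weighted threshold. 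The probability a uniform $k$-subset misses a subgroup of mass $\alpha$ is $\binom{(1-\alpha)n}{k}/\binom nk\le(1-\alpha)^k\le e^{-\alpha k}\le \exp\!\big(-k/(2m)^m\big)$. Step 3: union bound over the at most $m^{m+1}$ bad events, so the probability $\kREP{k}$ outputs anything other than $z^{\opt}$ is at most $m^{m+1}\exp(-k/(2m)^m)$. Step 4: on that bad event bound $\SC(\kREP{k}(X))\le \SC(z^{\opt})+ \text{(extra cost)}$; but the extra cost of flipping every issue is at most $(1-q)nm\le nm\le \tfrac1q \SC(z^{\opt})\cdot q = \SC(z^{\opt})/q$... — this is the spot needing care, since for $q\to0$ the ratio blows up, so instead I would note that the bad event for an issue with minority mass $q$ itself requires missing a subgroup of mass $q$, contributing only $e^{-qk}$, and the product $q^{-1}e^{-qk}$ (the blow-up times the error probability) is bounded; more cleanly, treat issues with tiny $q$ separately, exactly as in the proof of Theorem~\ref{th_approx_ratio_arbitrary_k} where small $p$ is excluded. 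Combining the per-issue contributions and optimizing the resolution parameter yields the stated bound.

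The main obstacle I anticipate is pinning down the exact combinatorial quantity $(2m)^m$: one must argue that the weighted-majority outcome on issue $j$ is determined once the committee ``sees'' every population cell of a coarsening of $\{0,1\}^m$ into at most $(2m)^m$ parts, and that seeing a cell of relative mass $\alpha$ suffices to give the correct side weight at least $\alpha n$. The delicate point is that weights are allocated to \emph{closest} committee members, so a voter's weight can ``leak'' to a member who agrees with him on most but not all issues, and one must check that such leakage cannot overturn the majority on any single issue when the committee is a fine enough net — this is where the i.i.d. structure (hence a product/threshold form for $z^{\opt}$) is essential. A secondary, more routine obstacle is the small-$q$ regime in Step 4, handled by the same truncation trick as in Theorem~\ref{th_approx_ratio_arbitrary_k}. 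I do not expect the constants ($m^{m+1}$, the factor $2$ inside $(2m)^m$) to be tight, only of the right shape, which matches the proposition's use of an explicit but generous bound.
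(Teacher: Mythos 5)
Your overall architecture — union bound over the committee missing population subgroups of mass at least $(2m)^{-m}$, plus a separate treatment of the small-minority regime — is the same as the paper's, but both of your critical steps are left in a state where they would fail as written. First, the structural claim. You propose that a committee forming a Hamming ``net'' of the population (members within distance $1$, or a $\delta$-net) forces the weighted vote to agree with $z^{\opt}$, and you correctly flag the weight-leakage problem as the obstacle — but you never resolve it, and with any positive net radius the claim is genuinely in danger: a voter's entire unit of weight can land on a member who disagrees with him on exactly the close issue, and this can flip that issue. The paper's resolution is to demand \emph{exact} matches: the good event $A$ is that for every vector $z\in\{0,1\}^m$ there is a committee member with $x_c=z$. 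Then each voter delegates only to members with identical preferences, the weighted tally on every issue equals the population tally, and leakage is a non-issue. The ``cells'' are therefore the $2^m$ preference vectors, and the lower bound $(2m)^{-m}$ on each cell's mass does \emph{not} follow from counting cells (cells could be empty); it follows from the i.i.d.\ product structure, $\P(x_i=z)=p^{q}(1-p)^{m-q}\ge p^m\ge (2m)^{-m}$, which is only valid once you have already restricted to $p\ge \tfrac{1}{2m}$.

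Second, the small-$p$ regime. Your quantitative fallback is wrong: $q^{-1}e^{-qk}$ is unbounded as $q\to 0$, so it cannot close the case. Also note that under the i.i.d.\ assumption all issues share the same minority fraction, so ``treat issues with tiny $q$ separately'' is not a per-issue truncation but a separate case for the whole profile. The correct mechanism when $p<\tfrac{1}{2m}$ is different in kind from the net argument: by the union bound over issues, more than a $1-pm>\tfrac12$ fraction of voters prefer the majority alternative on \emph{every} issue simultaneously, so a single such ``all-majority'' committee member collects more than half of the total weight and alone enforces $z^{\opt}$ on all issues. The committee misses this group with probability at most $(pm)^k$, and $\tfrac{1}{p}(pm)^k(1-2p)\le m/2^{k-1}$ stays bounded precisely because the miss probability degrades like $(pm)^k$, not $e^{-pk}$. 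Without these two repairs — exact-match representatives in the main case, and the majority-weight dictator in the small-$p$ case — the proof does not go through.
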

\fsy{We can probably assume independence instead of i.i.d.}
We see that for small number $m$ of issues the approximation ratio converges to $1$ exponentially fast in the size of the committee. Thus, as in Example~\ref{ex_exp_improvement}, the delegation improves the asymptotic behavior compared to $\kMAJ{k}$ (note that $\AR_{n,m}^\iid$ and $\AR_{n,m}$ coincide for $\kMAJ{k}$; the argument is similar to Lemma~\ref{lm_one_issue_gives_an_upper_bound} and is omitted). However, the rate of exponential convergence drops drastically with the growth of the number of issues $m$. Below we will see that this is not a coincidence.

Proposition~\ref{prop_delegation_small_m} is proved in Appendix~\ref{app_proofs_delegation} and the proof is sketched here.
\ifdefined\SKETCH
\begin{proof}[Proof sketch of Proposition~\ref{prop_delegation_small_m}]
It is enough to look at i.i.d. profiles $X$, where the majority prefers $0$ for each issue; we  denote by $p\leq \frac{1}{2}$ the fraction of supporters of the alternative $1$.

If $p$ is very small ($p< \frac{1}{2m}$), then by the union bound, more than half of the population prefers $0$ for all issues. Therefore, if at least one ``all 0" voter enters the committee, they get more than half of the total weight  and the committee selects the socially optimal outcome. Since such  voters are likely to be among the committee members (the probability is at least $1-(pm)^k$), the social cost of $\kREP{k}$ at $X$ is exponentially close to optimal.\rmr{if we extend to non iid, then the issues with $p_j<1/2m$ contribute together less than $(pm)^k$ to the error.}

For $p\geq \frac{1}{2m}$, the argument for the low social cost is different. Using the union bound we show that, with high probability, for each sequence $z\in\{0,1\}^m$ there is a committee member with such preferences. Note that given this event, each voter delegates his voting right to a member with exactly the same preferences as his own, and hence, the outcome of the committee vote coincides with the majority vote of the whole population. Details of the computation can be found in  Appendix~\ref{app_proofs_delegation}.\rmr{in the non-iid case we have the problem that while all voters have a perfect representative for the ``hard" issues, he may not represent them on the easy issues since he is not ``all 0" on them (and only half of them are ``all 0" anyway).}
\end{proof}
\fi
The next result complements Proposition~\ref{prop_delegation_small_m} and shows that for unbounded number of issues, delegation is harmful and the approximation ratio is separated from $1$ even for big committees.
\begin{proposition}\label{prop_delegation_is_bad}
For unbounded number of issues, the approximation ratio of $\kREP{k}$ admits the following lower bound
$$\AR\big(\kREP{k}\big)\geq \frac{9}{8}$$
for any committee size $k$. 
\end{proposition}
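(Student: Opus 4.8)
The plan is to prove the bound by an explicit adversarial construction. For $k=1$ we may simply invoke $\kREP{1}=\RD$, so that $\AR\big(\kREP{1}\big)=\AR(\RD)=2\geq 9/8$ by Example~\ref{ex_RD}; hence assume $k\geq 2$. For such $k$ the plan is to exhibit, for every target accuracy, a preference profile --- with the number of issues $m$ and the number of voters $n$ both taken large --- on which $\SC\big(\kREP{k}(X)\big)/\SC\big(z^{\opt}\big)$ is at least $9/8-o(1)$. The intuition is that with many issues a committee of $k$ members cannot ``cover'' a diverse electorate, so the closest-representative rule hands some member a Voronoi cell full of voters who are actually far from it; if one arranges that a \emph{single, frequently-sampled} type $\mathbf 0$ sits at the metric centre of a cloud of voters whose coordinate-wise majority points \emph{elsewhere}, then whenever $\mathbf 0$ enters the committee it collects more than half of the total weight and imposes its suboptimal opinion on every issue.

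\paragraph{The construction.}
Fix constants $\beta=\tfrac13$, $\alpha=\tfrac14$, $\gamma<\tfrac19$, and a set of ``direction'' issues $F\subset[m]$ with $|F|=\gamma m$. The electorate consists of (i) $\alpha n$ voters located at $\mathbf 0$ (the all-zeros vector) and (ii) $(1-\alpha)n$ ``cloud'' voters, each voting $1$ on every issue of $F$ and on a \emph{generic} set of exactly $(\beta-\gamma)m$ issues outside $F$ --- generic meaning distinct cloud voters get almost-disjoint such sets; since there are $2^{\Theta(m)}$ admissible vectors this is possible while keeping every single type's mass $o(n)$ (so that the committee's cloud members sit at distinct types). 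Two elementary computations are needed. First, the coordinate-wise majority, hence the optimum, is $z^{\opt}=\mathbf 1_F$ (the indicator of $F$): on $F$-issues the cloud is a $(1-\alpha)$-supermajority for $1$, and on each other issue the $\mathbf 0$-voters together with the cloud voters who vote $0$ (a $(1-\alpha)\tfrac{1-\beta}{1-\gamma}$ fraction) exceed $\tfrac12$ because $\alpha=\tfrac14,\beta=\tfrac13$; one gets $\SC(z^{\opt})=nm\big[\alpha\gamma+(1-\alpha)(\beta-\gamma)\big]$ while $\SC(\mathbf 0)=(1-\alpha)\,nm\,\beta$. Second, a cloud voter $v$ has $d(v,\mathbf 0)=\beta m$, whereas two distinct cloud types lie at Hamming distance $\tfrac{2(\beta-\gamma)(1-\beta)}{1-\gamma}\,m$, which exceeds $\beta m$ by a positive fraction of $m$ for every $\gamma<\tfrac19$ (equality holds at $\gamma=\tfrac19$). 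Thus $\mathbf 0$ is the unique closest point of $\{\mathbf 0\}\cup(\text{cloud})$ to every cloud voter, with a constant-fraction margin.

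\paragraph{Analysing $\kREP{k}$.}
Draw the committee. With probability $1-(1-\alpha)^k$ (in the limit $n\to\infty$) at least one of its members is an $\mathbf 0$-voter; conditioned on this, the second computation above, a union bound over the at most $nk$ voter/member pairs, and Hoeffding concentration of Hamming distances (this is where $m$ must be large relative to $n$ and $k$) give that, with probability $1-o(1)$, every voter delegates to an $\mathbf 0$-member. The $\mathbf 0$-members then carry total weight $n-o(n)>\tfrac n2$, so the inter-committee vote outputs $\mathbf 0$ on every issue, costing $\SC(\mathbf 0)$; the ratio is $\rho:=\SC(\mathbf 0)/\SC(z^{\opt})=\big(1-\tfrac{\gamma(1-2\alpha)}{(1-\alpha)\beta}\big)^{-1}=(1-2\gamma)^{-1}$, which tends to $\tfrac97$ as $\gamma\uparrow\tfrac19$. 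In the complementary event we only use the trivial bound $\SC\geq\SC(z^{\opt})$. Altogether
$$\AR\big(\kREP{k}\big)\ \geq\ \big(1-(1-\alpha)^{k}\big)\,\rho\ +\ (1-\alpha)^{k}\ -\ o(1)\ =\ \rho-(1-\alpha)^{k}(\rho-1)-o(1),$$
which is increasing in $k$ and therefore minimised over $k\geq 2$ at $k=2$, where letting $\gamma\uparrow\tfrac19$ it approaches $\tfrac97-\tfrac9{16}\cdot\tfrac27=\tfrac98$; for $k>2$ it is strictly larger (approaching $\tfrac97$). Taking $\gamma\uparrow\tfrac19$ and $n,m\to\infty$ concludes the proof.

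\paragraph{Main obstacle.}
The only non-routine step is making the ``metric hub'' property robust against the committee's randomness: one must rule out, with high probability, that some sampled cloud member happens to fall within the $\beta m$ threshold of some cloud voter, which would let that member (rather than $\mathbf 0$) capture part of the mass. This is exactly why many issues are essential --- the separation $d(v,v')>d(v,\mathbf 0)$ for cloud voters holds only in expectation for fixed $m$, and the Hoeffding error in the union bound forces $m\gg n,k$. Everything else is bookkeeping: verifying the two majority/cost computations and checking that the choice $\alpha=\tfrac14,\beta=\tfrac13,\gamma\uparrow\tfrac19$ makes the $k=2$ bound tight at $9/8$ (other admissible constants only move the large-$k$ limit, not the uniform constant).
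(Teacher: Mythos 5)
Your proposal is correct and follows essentially the same route as the paper's proof: a two-group ``hub and cloud'' profile in which a small unanimous group is metrically closer to every cloud voter than distinct cloud voters are to each other (established via concentration of Hamming distances over many issues), so that whenever a hub voter is sampled it captures a majority of the delegated weight and imposes its suboptimal position on every issue, after which one combines the probability of this event with the cost ratio. The differences are only bookkeeping: you use an explicit block construction with hub $\mathbf 0$ and compute the optimum exactly (note your cloud--cloud distance $\frac{2(\beta-\gamma)(1-\beta)}{1-\gamma}m$ is the one for \emph{independent random} index sets, not ``almost-disjoint'' ones, but that is the formula you actually need), whereas the paper uses two Bernoulli parameters $q<p$ and bounds $\SC(z^{\opt})\le \SC(\mathbf 0)$; your constants give a conditional ratio of only $\tfrac97$ versus the paper's limiting $\tfrac32$, which is why you must patch $k=1$ separately via $\kREP{1}=\RD$, while the paper's single construction yields $\tfrac32-\tfrac12\left(\tfrac34\right)^k\ge\tfrac98$ for all $k\ge 1$.
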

\begin{corollary}
Comparing this result with Theorem~\ref{th_approx_ratio_arbitrary_k}, we see that 
$\kMAJ{1000}$ outperforms $\kREP{k}$ for any committee size $k$. 
\new{Note that ancient Athenian democracy had $1100$ magistrates, $1000$ of whom were selected by lot~\citep{hansen1999athenian}. Committees of size $1000$ are also recommended by \citet{mueller1972}.}
\rmr{maybe say $\kMAJ{1000}$? that is what recommended in \cite{mueller1972}}\fsy{Done} 
\end{corollary}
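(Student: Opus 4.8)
The plan is to exhibit, for every committee size $k$, a family of profiles on which proximity‑based weighting systematically backfires, forcing $\kREP{k}$ into a social cost bounded away from the optimum uniformly in $k$. Fix a constant $\alpha$ with $1-\tfrac{1}{\sqrt2}<\alpha<\tfrac12$ (say $\alpha=\tfrac13$) and a large parameter $M$, and take $m=M+1$ issues numbered $0,1,\dots,M$. Let every voter's preference be one of the unit vectors $e_0,e_1,\dots,e_M\in\{0,1\}^m$: a \emph{core} of $a\approx\alpha n$ voters at $e_0$, and for each $j=1,\dots,M$ a \emph{satellite} group of $b\approx\tfrac{(1-\alpha)n}{M}$ voters at $e_j$ (multiplicities chosen so that $a/n\to\alpha$; $b=1$ already works). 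Any two distinct types are at Hamming distance $2$. Since $a<Mb$, the optimum is $z^{\opt}=\mathbf 0$: issue $0$ is carried by the value $0$ with margin $(1-\alpha):\alpha$ and each issue $j\ge1$ by a wider margin, and $\SC(z^{\opt})=\sum_i\|x_i\|_1=n$.

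First I would determine the weight distribution of a random committee $C$. Let $c$ be the number of core members in $C$, and condition on the event --- of probability $1-o(1)$ as soon as $M\gg k^2$ --- that the non‑core members of $C$ occupy $k-c$ distinct satellites. A satellite voter whose own type is \emph{not} represented in $C$ lies at distance $2$ from every committee member, so it splits its unit of weight uniformly over all $k$ of them; a $c/k$ fraction of that weight therefore flows to the core members. Adding the full weight $a$ of the core voters themselves, the weight controlled by the $e_0$‑members is $W_0\approx n\bigl(\alpha+(1-\alpha)\tfrac{c}{k}\bigr)$, and those are precisely the members that vote $1$ on issue $0$. Hence issue $0$ is flipped to the \emph{minority} value $1$ exactly when $W_0>n/2$. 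For $k$ larger than an absolute constant, the Hoeffding bound for the hypergeometric count $c$ (already used in the proof of Theorem~\ref{th_approx_ratio_arbitrary_k}) gives $c/k\to\alpha$, so $W_0\to n\,\alpha(2-\alpha)$, which exceeds $n/2$ because $\alpha>1-\tfrac{1}{\sqrt2}$; the finitely many small committee sizes are handled by a direct computation of the relevant hypergeometric probabilities, with $k=1$ covered by the random‑dictator bound $\AR_n(\RD)=2-\tfrac{2}{n}$ of Example~\ref{ex_RD}.

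It remains to read off the cost. Since the objective separates over issues, the per‑issue cost of any outcome is at least the per‑issue optimum, so $\SC\bigl(\kREP{k}(X)\bigr)-\SC(z^{\opt})$ is at least the loss incurred on issue $0$ alone, which is $Mb-a$ whenever $z_0=1$. Taking expectations over the committee draw and using $\P(z_0=1)=1-o(1)$ (for $k$ large) gives $\SC\bigl(\kREP{k}(X)\bigr)\ge n+(1-o(1))(1-2\alpha)n$, hence $\AR\bigl(\kREP{k}\bigr)\ge 2(1-\alpha)$. Letting $\alpha\downarrow 1-\tfrac{1}{\sqrt2}$ drives this toward $\sqrt2$; any admissible $\alpha$ (e.g.\ $\alpha=\tfrac13$, giving $\tfrac43$) already exceeds $\tfrac98$, and comparison with Theorem~\ref{th_approx_ratio_arbitrary_k} yields the stated domination of $\kREP{k}$ by $\kMAJ{k'}$ with $k'$ a sufficiently large constant such as $1000$.

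The conceptual heart is the single observation that forcing the satellites to be mutually equidistant makes each ``orphaned'' satellite voter share its weight with the core, so a $\Theta(1)$ fraction of the whole electorate's weight is handed to a group that sits in the minority on issue $0$ --- exactly the pathology of distance‑weighting that vanishes in the one‑issue world of Example~\ref{ex_exp_improvement}. The main technical obstacle is the probabilistic bookkeeping needed to make ``issue $0$ is flipped'' valid for \emph{every} $k$: one must pick $M$ large enough (as a function of $k$ and $n$) that the distinct‑satellites event, the approximation $(M-k+c)b\approx(1-\alpha)n$, and the strict inequality $W_0>n/2$ hold simultaneously, and one must separately dispatch the few small $k$ where the hypergeometric count of core members is not yet concentrated.
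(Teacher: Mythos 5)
Your proposal is correct in substance, but it reaches the corollary by a genuinely different route from the paper's. The paper obtains the conclusion by combining the upper bound of Theorem~\ref{th_approx_ratio_arbitrary_k} (which gives $\AR\big(\kMAJ{1000}\big)\le 1+3.639/\sqrt{1000}\approx 1.115$) with the lower bound $\AR\big(\kREP{k}\big)\ge\frac98$ of Proposition~\ref{prop_delegation_is_bad}, and the latter is proved via a \emph{probabilistic} two-block profile: a unanimous group $Q$ and a spread-out group $P$ of i.i.d.\ Bernoulli voters, chosen (via the law of large numbers) so that every $P$--$Q$ distance is strictly smaller than every within-$P$ distance; any $Q$-member on the committee then absorbs \emph{all} outside weight and dictates the entire outcome. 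You instead build an explicit ``core plus singleton satellites'' profile in which all distinct types are unit vectors at mutual Hamming distance $2$, so orphaned satellites split their weight \emph{evenly} over the committee rather than handing it to a strictly closer bloc; the core then wins only issue $0$ by collecting its own bloc weight $\alpha n$ plus a $c/k$ share of the orphan weight. This buys you three things: a deterministic construction with no law-of-large-numbers step, a cleaner cost accounting (issue-separability lets you charge only the loss on issue $0$), and a quantitatively stronger constant ($\tfrac43$ at $\alpha=\tfrac13$, approaching $\sqrt2$, versus the paper's $\tfrac98$). What the paper's version buys is a bound that is trivially uniform in $k$ (the flip event ``$C\cap Q\neq\emptyset$'' has probability $1-\alpha^k$, monotone and explicit), whereas your flip event $\{c>k/4\}$ for the hypergeometric count $c$ has probability that is not monotone in $k$ and dips uncomfortably close to the threshold at small even $k$ (e.g.\ $k=4$ gives $\P(c\ge 2)\approx 0.407$ and a ratio of only about $1.136$); your deferral of these cases to ``direct computation'' is legitimate but is the one place where the argument genuinely has to be carried out case by case (Hoeffding covers, say, $k\ge 34$, and the rest is a finite check), and the final comparison with $\kMAJ{1000}$ must use the \emph{minimum} over $k$ of your lower bound rather than its asymptotic value.
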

The fact that delegation in multi-issue setting may lead to inferior outcomes compared to majority vote is mentioned in~\citep[Section 5]{cohensius2017proxy} for a similar model but without the worst-case analysis. Their insight is that for some preference profiles delegation leads to the most extreme voters attracting almost all weight, thus wasting the information about others' preferences. We build upon this insight.
\ifdefined\SKETCH
\begin{proof}[Proof sketch (for details, see Appendix~\ref{app_proofs_delegation})]
The idea is to construct a preference profile $X$ with two groups of voters $P$ and $Q$. All voters from $Q$ have identical preferences and all voters from $P$ are equidistant and the distance between any two distinct voters from $P$ is bigger than the distance between a voter from $P$ and a voter from $Q$.

Given such a metric structure, if any voters from $Q$ enter the committee, they receive the weight both from $Q$ and $P$ and thus the committee decision becomes dictatorial and coincides with preferences of a voter from $Q$. In particular, the social cost of the outcome coincides with the social cost of the unanimous preferences of $Q$.

In the appendix, we show existence of such a preference profile $X$, where the social cost of $Q$'s preferences is $\frac{9}{8}$ of the optimum.
The profile $X$ is constructed via the probabilistic argument: $(x_{i,j})_{i\in P, j\in [m]}$ are i.i.d. Bernoulli random variables with success probability $p < \frac{1}{2}$;  and for $i\in Q$ and $j\in[m]$ we put $x_{i, j}=\xi_j$, where $(\xi_j)_{j\in [m]}$ are i.i.d. Bernoulli random variables with success probability $q<p$. Therefore, for large number of issues $m$, the distance between any two voters from $P$ is approximately $2p(1-p)m\cdot(1+o(1))$, while the distance between a voter from $P$ and a voter from $Q$ is $(p(1-q)+q(1-p))m\cdot(1+o(1))$. Optimization over $p$, $q$, and sizes of $P$ and $Q$ leads to the desired lower bound.
\end{proof}
\fi

\section{Optimality of Sortition}\label{sec_general}
In Section~\ref{sec_delegation}, we considered weighing of the committee members, where a voter delegates one unit of weight to the closest representative. This method proved to be much better than the simple majority rule for a few i.i.d. issues, however it gives no advantage and may even harm if the number of issues is big.

One may think that poor guarantees for large number of issues is a feature of this particular delegation method. Indeed, it is easy to come up with alternative proposals that seem to be better:
a voter may ``smoothly'' distribute the weight among the committee members in a way that members that are closer to him get more weight. Also, instead of selecting the committee uniformly, a higher chance can be given to voters with smallest average distance to the rest of the population. 

In this section we show that none of such natural modifications can improve the approximation ratio with many issues:  $\kMAJ{k}$, a uniformly random committee with the simple majority rule, is worst-case optimal among a large family of rules that may use the whole metric information about the preference profile.

\rmr{I think every ``neutral" (invariant to permutations of the coordinates) and anonymous rule must be of this form, no? } \rmr{also, maybe we can tie it to strategyproofness. If we show optimality in the set of all SP rules this is great. $\kREP{k}$ is not SP I think:  suppose there are 1 ``critical issue" $X$ and 10 other dummy issues $Z$. agent~$i$ is all 0. there are 100 voters that vote 0 on $X$ and 1 on $Z$, 100 voters that vote 1 on $X$ and 1 on $Z$. There are also 10 $j$ voters who vote 1 on $X$ and 0 on $Z$, and 10 $k$ voters who vote 0 on $k$ and 1 on half of $Z$ issues.

So the chance that $i$ will be in the committee is low, but it is likely to contain both $j$ and $k$ voters. Also, dummy issues $Z$ will probably be all 1 regardless of $i$'s vote. However the critical $X$ issue is likely to be close to a tie. by being truthful. $i$ is closer to $j$, but by lying it can become closer to $k$ and affect the critical issue.}

%
\begin{theorem}\label{theorem_general}
For any distance-based $k$-committee rule $g$ with issue-wise inter-committee vote and any number of voters $n$, we have 
$$\sup_{m\in \N} \AR_{n,m}\big(g\big)\geq \AR_{n,1}\big(\kMAJ{k}\big).$$
In other words, a uniformly-random committee with the simple majority rule is worst-case optimal within this family of committee rules. 
\end{theorem}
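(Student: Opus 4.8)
The plan is to prove a matching lower bound through a family of hard instances, one for each ``minority size'' $n_1\in\{1,\dots,\lfloor n/2\rfloor\}$: I will exhibit a profile with $m=\binom{n}{n_1}$ issues on which every distance-based $k$-committee rule $g$ with issue-wise inter-committee vote has approximation ratio at least $1+\tfrac{n-2n_1}{n_1}\,q_k(n_1)$, where $q_k(n_1)=\P[T>k/2]+\tfrac12\P[T=k/2]$ for $T\sim\mathrm{Hyp}(n,n_1,k)$ is exactly the probability that $\MAJ$ picks the minority alternative inside a uniformly random $k$-committee. Since the one-issue analysis behind Examples~\ref{ex_RD} and~\ref{ex_3_MAJ} gives $\AR_{n,1}(\kMAJ{k})=\max_{n_1}\bigl(1+\tfrac{n-2n_1}{n_1}q_k(n_1)\bigr)$, taking the supremum over $n_1$ (hence over $m$) will finish the proof.

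Construction. Fix $n_1$ and let $Y\in\{0,1\}^{n\times m}$, $m=\binom{n}{n_1}$, be the profile whose columns enumerate all indicator vectors of the $n_1$-subsets of $[n]$, and let $\bar Y$ be its bitwise complement. Two structural facts drive everything. First, every pair of voters is equidistant, $d(x_i,x_j)=2\binom{n-2}{n_1-1}$, so $D(Y)=D(\bar Y)$ is a scalar multiple of $J-I$; since $g$ is distance-based, $g(Y)=g(\bar Y)$ \emph{as one and the same distribution over pairs $(C,h)$}. Second, in $Y$ every issue has its $n_1$ ones as the minority, so $z^{\opt}(Y)=\mathbf 0$ with optimum $m\,n_1$, whereas in $\bar Y$ every issue has $n_1$ zeros as the minority, so $z^{\opt}(\bar Y)=\mathbf 1$, again with optimum $m\,n_1$. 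Thus $g$ is forced to behave identically on two profiles that demand opposite answers.

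The computation. Fix a draw $(C,h)$ from $g(Y)$. Issue-wiseness yields $\phi_h\colon\{0,1\}^k\to[0,1]$ with $\P[z_j=1\mid C,h]=\phi_h\bigl((Y|_C)_{\cdot,j}\bigr)$ for all $j$, and the corresponding probability under $\bar Y$ uses the complemented column. The key point is that, because $Y$ lists \emph{all} $n_1$-subsets, the frequency of any fixed pattern $v$ among the columns of $Y|_C$ equals $\binom{n-k}{n_1-|v|}/\binom{n}{n_1}$, which depends neither on $C$ nor on $v$ beyond its weight (Vandermonde); hence averaging $\phi_h$ over the $m$ issues equals $\E_{T\sim\mathrm{Hyp}(n,n_1,k)}[\psi_h(T)]$, where $\psi_h(t)$ averages $\phi_h$ over weight-$t$ vectors, and the row-ordering convention in the definition of $X|_C$ plays no role. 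Writing $R_Y,R_{\bar Y}$ for the two approximation ratios and using $g(Y)=g(\bar Y)$, a short calculation gives
\[
\tfrac12\bigl(R_Y+R_{\bar Y}\bigr)=1+\frac{n-2n_1}{2n_1}\Bigl(1+\E_h\Bigl[\textstyle\sum_t (h_t-h_{k-t})\,\psi_h(t)\Bigr]\Bigr),\qquad h_t:=\P[\mathrm{Hyp}(n,n_1,k)=t].
\]
It then remains to minimise $\sum_t(h_t-h_{k-t})\psi(t)$ over $\psi\colon\{0,\dots,k\}\to[0,1]$; I would check, by a ratio computation using $n_1\le n-n_1$, that $h_t-h_{k-t}\ge 0$ iff $t\le k/2$, so the minimiser is $\psi(t)=\mathbf 1[t>k/2]$ with minimum value $\P[T>k/2]-\P[T<k/2]=2q_k(n_1)-1$. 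Substituting gives $\max(R_Y,R_{\bar Y})\ge\tfrac12(R_Y+R_{\bar Y})\ge 1+\tfrac{n-2n_1}{n_1}q_k(n_1)$, as required.

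The main obstacle is conceptual rather than computational: a single symmetric profile cannot work, because a distance-based rule that ``knew'' that every issue's ones lie in the minority could simply output $\mathbf 0$ and be optimal. Pairing $Y$ with its complement $\bar Y$ — indistinguishable to any distance-based rule yet demanding opposite outcomes — is precisely what converts this tension, via the averaging step and the sign pattern of $h_t-h_{k-t}$, into the claimed bound. The remaining ingredients (the Vandermonde bookkeeping producing a hypergeometric law independent of $C$, the monotonicity of the hypergeometric pmf, and the identification $\AR_{n,1}(\kMAJ{k})=\max_{n_1}(1+\tfrac{n-2n_1}{n_1}q_k(n_1))$) are routine.
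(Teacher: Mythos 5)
Your overall strategy is the same as the paper's: build an equidistant profile, pair it with its bitwise complement so that a distance-based rule cannot tell the two apart, and reduce the combined cost to an optimization over issue-wise response functions that is solved by the majority threshold. (Your $\binom{n}{n_1}$-column construction, and the idea of averaging over \emph{issues} for a fixed committee instead of over a uniformly random committee for a fixed issue, are a nice variant that would use fewer issues than the paper's $n!$.) However, there is a genuine gap at the first line of ``the computation'': you posit a single function $\phi_h$ with $\P[z_j=1\mid C,h]=\phi_h\bigl((Y|_C)_{\cdot,j}\bigr)$ \emph{for all $j$}. The definition of an issue-wise rule only guarantees that $\P[z_j=1]$ is \emph{some} function of column $j$ of the committee's profile; that function may differ from issue to issue, i.e.\ the correct statement is $\P[z_j=1\mid C,h]=\phi_{h,j}\bigl((Y|_C)_{\cdot,j}\bigr)$ (the paper's proof keeps this subscript, writing $h_{q,j}$). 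With $j$-dependence the average over issues no longer collapses to $\E_T[\psi_h(T)]$, and in fact your per-pair lower bound is false: consider the rule that deterministically outputs the fixed committee $C_0=\{1,\dots,k\}$ together with the issue-wise $h^*$ defined by $z_j=1$ iff the observed column equals $\overline{v_j}$, where $v_j:=(Y|_{C_0})_{\cdot,j}$ and the bar denotes bitwise complement. This rule is distance-based (it is constant in $D$), is a $k$-committee rule with issue-wise inter-committee vote, and is \emph{exactly optimal on both $Y$ and $\bar Y$}: it outputs $\mathbf 0$ on $Y$ and $\mathbf 1$ on $\bar Y$. Hence $\max(R_Y,R_{\bar Y})=1$ for this rule, so the pair $(Y,\bar Y)$ alone cannot force the claimed bound.

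The missing ingredient is exactly the paper's Lemma~\ref{lm_anonymization}: one must first pass to the anonymization $g^{\sym}$, whose approximation ratio is no larger than that of $g$ because the orbit $\{\pi(Y):\pi\in S_n\}$ consists of profiles with the same distance matrix and the same optimal cost. Only after averaging over voter permutations is the response to a committee column forced to depend on it through its weight alone, which is where the hypergeometric law and your $\psi_h(t)$ legitimately enter; the rule $h^*$ above is destroyed by this averaging, and indeed its true worst case sits on some $\pi(Y)$ rather than on $Y$ or $\bar Y$. In your construction each $\pi(Y)$ is simply $Y$ with its columns relabelled, so equivalently you could average your inequality over all such column-permutations. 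With that repair the rest of your computation (the Vandermonde bookkeeping, the sign pattern of $h_t-h_{k-t}$, and the identification of $\AR_{n,1}(\kMAJ{k})$ with $\max_{n_1}\bigl(1+\tfrac{n-2n_1}{n_1}q_k(n_1)\bigr)$) is correct and the argument essentially coincides with the paper's proof.
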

We prove a stronger result: instead of supremum over $m$ on the left-hand side, one can take $m=n!$ issues.

The proof is based on two lemmas. The first one shows that it is enough to prove the theorem for anonymous committee rules, i.e., symmetric with respect to permutation of voters. 

For a preference profile $X$ with $n$ voters and a permutation $\pi$ of $[n]$, we denote by $\pi(X)$ the preference profile with  permuted voters: $\pi(X)_{i,j}=X_{\pi(i),j}$.  
\begin{definition}[anonymous voting rules]
A voting rule $h$ is anonymous if $h(X)=h\big(\pi(X)\big)$ for any preference profile $X$ and permutation $\pi$.
\end{definition}
Note that even if  committee members receive different weights as in Section~\ref{sec_delegation} (and thus inter-committee rule is not anonymous), the committee rule 
may still be anonymous if the selection of the committee and the inter-committee rule depends on preferences of the population in an anonymous way. For example, $\kREP{k}$ rule is anonymous.

For a rule $h$, denote by $h^{\sym}$ its \emph{anonymization}: for any preference profile~$X$
$$h^{\sym}(X)=\frac{1}{n!}\sum_{\pi\in S_n}h\big(\pi(X)\big),$$
where $n$ is the number of voters and $S_n$ is the set of all permutations of $[n]$.  
\begin{lemma}\label{lm_anonymization}
For any numbers of voters and issues and any voting rule $h$
$$ \AR_{n,m}\big(h^\sym\big)\leq \AR_{n,m}\big(h\big).$$
\end{lemma}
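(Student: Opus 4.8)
The plan is to reduce the statement to two elementary invariances of the objective under relabelling of voters, and then average. First I would record that the social cost of a \emph{fixed} outcome $z$ is invariant under permuting voters: for any profile $X$, permutation $\pi\in S_n$, and $z\in\{0,1\}^m$,
\[
\sum_{i\in[n]} d\big(\pi(X)_i,z\big)=\sum_{i\in[n]} d(x_{\pi(i)},z)=\sum_{i\in[n]} d(x_i,z),
\]
since the middle sum is just a reordering of the right-hand one. Minimizing over $z$, this also gives $\SC\big(z^{\opt}(\pi(X))\big)=\SC\big(z^{\opt}(X)\big)$, so both the cost of each outcome and the optimal cost depend only on the multiset of voters, not on their order.

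Next I would fix a profile $X$ with $\SC\big(z^{\opt}(X)\big)>0$ and unfold $h^{\sym}$ using linearity of expectation:
\begin{align*}
\SC\big(h^{\sym}(X)\big)
=\E_{z\sim h^{\sym}(X)}\Big[\sum_{i\in[n]} d(x_i,z)\Big]
=\frac{1}{n!}\sum_{\pi\in S_n}\E_{z\sim h(\pi(X))}\Big[\sum_{i\in[n]} d(x_i,z)\Big].
\end{align*}
By the invariance above, the inner expectation equals $\E_{z\sim h(\pi(X))}\big[\sum_{i\in[n]} d(\pi(X)_i,z)\big]=\SC\big(h(\pi(X))\big)$, which by the definition of the approximation ratio is at most $\AR_{n,m}(h)\cdot\SC\big(z^{\opt}(\pi(X))\big)=\AR_{n,m}(h)\cdot\SC\big(z^{\opt}(X)\big)$. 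Averaging these bounds over $\pi$ and dividing by $\SC\big(z^{\opt}(X)\big)$ yields $\SC\big(h^{\sym}(X)\big)/\SC\big(z^{\opt}(X)\big)\le\AR_{n,m}(h)$; taking the maximum over all such $X$ gives the claimed inequality.

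It remains to dispatch the unanimous profiles, where $\SC\big(z^{\opt}(X)\big)=0$: there $\pi(X)=X$ for every $\pi$, so $h^{\sym}(X)=h(X)$ and the ratio attached to $X$ is literally the same for the two rules (with the convention $0/0=1$), hence such profiles cannot make $\AR_{n,m}\big(h^{\sym}\big)$ exceed $\AR_{n,m}(h)$. I do not expect any real obstacle: this is a convexity/averaging argument, and the only thing to get right is the permutation-invariance of both the cost of a fixed outcome and of the optimal cost, which is exactly the content of the first step.
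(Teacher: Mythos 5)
Your proof is correct and follows essentially the same averaging argument as the paper: both rest on the permutation-invariance of the social cost (of any fixed outcome, hence of the optimum) and then bound the average of $\SC\big(h(\pi(X))\big)$ over $\pi\in S_n$ — the paper by the maximal term, you by bounding each term with $\AR_{n,m}(h)\cdot\SC\big(z^{\opt}(X)\big)$, which is the same thing. Your explicit treatment of the unanimous ($0/0$) profiles is a welcome bit of extra care that the paper leaves implicit.
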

The proof is elementary and can be found in Appendix~\ref{app_proofs_general}.

The combination of anonymity and the distant-based property becomes very restrictive if distances between pairs of voters are all the same. The following lemma demonstrates that the set of such preference profiles is rich enough.
\begin{lemma}\label{lm_equidistant}
For any number of voters $n$ and $n_1\leq n$, there exists a preference profile with $n$ voters and $m=n!$ issues such that for each issue $j\in [m]$ exactly $n_1$ voters support the alternative $1$ and for each pair of distinct voters $i,i'\in[n]$ the distance $d(x_i,x_{i'})=\frac{2p(1-p)}{1-\frac{1}{n}}$, where $p=\frac{n_1}{n}$.
\end{lemma}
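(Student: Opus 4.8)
The plan is to write down an explicit profile that is symmetric under relabelling of the voters: such symmetry forces all pairwise distances to coincide, and the common value can then be read off by double counting. Take $m=n!$ and index the issues by the permutations $\sigma\in S_n$. For issue $\sigma$ and voter $i\in[n]$ set
$$x_{i,\sigma}=1 \quad\Longleftrightarrow\quad \sigma^{-1}(i)\le n_1,$$
so that the supporters of alternative $1$ on issue $\sigma$ are exactly $\sigma(1),\dots,\sigma(n_1)$. In particular every issue has exactly $n_1$ supporters of alternative $1$, which is the first requirement.

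For the equidistance I would argue by symmetry. Fix $\pi\in S_n$; by the definition of the action on profiles, $\pi(X)_{i,\sigma}=X_{\pi(i),\sigma}=\mathbf 1\{\sigma^{-1}\pi(i)\le n_1\}$. Since $\sigma^{-1}\pi(i)=(\pi^{-1}\sigma)^{-1}(i)$, column $\sigma$ of $\pi(X)$ equals column $\pi^{-1}\sigma$ of $X$; as $\sigma$ ranges over $S_n$ so does $\pi^{-1}\sigma$, hence $\pi(X)$ is obtained from $X$ merely by permuting columns. Permuting columns leaves the Hamming distance between any two rows unchanged, so $d(x_i,x_{i'})=d(x_{\pi(i)},x_{\pi(i')})$ for every $\pi\in S_n$ and every pair $i\ne i'$. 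Because $S_n$ acts transitively on ordered pairs of distinct elements of $[n]$ (for $n\ge 2$; for $n=1$ the claim about pairs is vacuous), all the numbers $d(x_i,x_{i'})$ with $i\ne i'$ are equal; denote this common value by $D$.

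It remains to evaluate $D$, which I would do by counting $\sum_{j\in[m]}\sum_{\{i,i'\}}\mathbf 1\{x_{i,j}\ne x_{i',j}\}$ (the inner sum over unordered pairs) in two ways. Summing over pairs first gives $\binom{n}{2}D$. Summing over issues first: on each issue exactly $n_1$ voters hold a $1$ and $n-n_1$ hold a $0$, contributing $n_1(n-n_1)$ disagreeing pairs, hence $m\,n_1(n-n_1)$ in total. Equating and substituting $p=n_1/n$,
$$D=\frac{2\,m\,n_1(n-n_1)}{n(n-1)}=\frac{2p(1-p)}{1-\tfrac1n}\,m,$$
which is the value in the statement up to the factor $m$ coming from the number of issues (equivalently, the per-issue normalized distance is $\tfrac{2p(1-p)}{1-1/n}$); note also $D=2(n-2)!\,n_1(n-n_1)\in\N$, as it must be. There is no genuine obstacle here: the only points needing a little care are the identity $\sigma^{-1}\pi(i)=(\pi^{-1}\sigma)^{-1}(i)$, which makes the symmetry argument go through, and the two bookkeeping identities used to pin down $D$.
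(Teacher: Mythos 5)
Your proposal is correct and follows essentially the same route as the paper: the profile indexed by $S_n$ with $x_{i,\sigma}=\mathbf 1\{\sigma^{-1}(i)\le n_1\}$ is the paper's construction $x_{i,\pi}=\mathbf 1\{\pi(i)\in A\}$ up to the re-indexing $\sigma\mapsto\sigma^{-1}$ of the issues, your column-permutation symmetry argument is the paper's right-composition automorphism phrased slightly more globally, and the double count over (issue, pair) is exactly how the paper pins down the common distance. You are also right that the correct value carries a factor of $m=n!$ (i.e.\ $D=2(n-2)!\,n_1(n-n_1)$), which the lemma's stated formula omits; the paper's own closing computation has the same slip.
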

\ifdefined\SKETCH
\begin{proof}[Sketch of the proof (see Appendix~\ref{app_proofs_general} for the formal argument):]
The idea is similar to the one used in Proposition~\ref{prop_delegation_is_bad}: if preferences of each voter are given by a vector of i.i.d. Bernoulli random variables with success probability $p=\frac{n_1}{n}$, then, in expectation, $n_1$ voters prefer alternative $1$ for each $j\in[m]$, and the distance between any pair of voters concentrates around $2p(1-p)m$ for large $m$ by the law of large numbers.

To prove Lemma~\ref{lm_equidistant} we use a derandomization of this construction. This allows to have exactly $n_1$ supporters of $1$ for each issue and exact equality of distances. Details can be found in Appendix~\ref{app_proofs_general}. 
\end{proof}

Now we are ready to prove the theorem.\fi
\begin{proof}[Sketch of the proof of Theorem~\ref{theorem_general} (see Appendix~\ref{app_proofs_general} for details)]
By Lemma~\ref{lm_anonymization}, we can assume that $g$ is anonymous without loss of generality. 

Using Lemma~\ref{lm_equidistant}, we pick a profile $X$ such that all the
voters are equidistant. On such a profile, the metric information encoded in the distance matrix becomes useless and the only anonymous way to select a committee is uniformly at random.

The next step is to show that among the family of all committee rules with uniformly random committee and ble inter-committee vote, no rule can outperform $\kMAJ{k}$ both on the profile $X$ and on the complement profile $\bar{X}$, where preference of all the voters are flipped. This boils down to solving an explicit optimization problem; the details can be found in Appendix~\ref{app_proofs_general}.

Lemma~\ref{lm_equidistant} provides enough flexibility to make $X$ (and, hence, $\bar{X}$), the worst-case profile for $\kMAJ{k}$. Since $g$ is worse off on one of these profiles, it has higher approximation ratio.
\end{proof}

\section{Discussion}\label{sec_conclusions}

The main takeaway message from our work is that at least when the preferences of the individuals in the society are separable, the simple and well known $k$-sortition rule is a reasonable choice that guarantees low approximation ratio.

While for a low number of issues we saw that there are more representative rules than the $k$-sortition rule, the latter also has other benefits. 
First, the selection of the committee $C$ (and its internal voting rule $h$) is completely oblivious to voters' positions. As such, it can be used to select a committee even before we know some or all of the issues on the agenda. 

Second, while we did not focus on incentive analysis in this work, it is easy to see that $k$-sortition is strategyproof for the the entire population: no voter can affect the selection of the committee, and representatives are always weakly better of by voting their true position on every issue. In contrast, under the proxy-weighted variant, voters may have an incentive to lie in order to affect the weights of committee members. 

\medskip
The main direction we are currently investigating is a better understanding of the effects of delegation, in particular when there are few issues and/or our restrictions on the structure of preferences. 

In the long run, we are interested in how sortition and/or delegation-based voting rule can guarantee low distortion and good approximation in other domains, including ranked preferences and interdependent issues.



\nocite{}
\addcontentsline{toc}{section}{\protect\numberline{}References}%

\bibliographystyle{plainnat}

\bibliography{references}

\begin{thebibliography}{40}
\providecommand{\natexlab}[1]{#1}
\providecommand{\url}[1]{\texttt{#1}}
\expandafter\ifx\csname urlstyle\endcsname\relax
  \providecommand{\doi}[1]{doi: #1}\else
  \providecommand{\doi}{doi: \begingroup \urlstyle{rm}\Url}\fi

\bibitem[Abramowitz and Mattei(2018)]{abramowitz2018flexible}
Ben Abramowitz and Nicholas Mattei.
\newblock Flexible representative democracy: an introduction with binary
  issues.
\newblock \emph{arXiv preprint arXiv:1811.02921}, 2018.

\bibitem[Alger(2006)]{alger2006voting}
Dan Alger.
\newblock Voting by proxy.
\newblock \emph{Public Choice}, 126\penalty0 (1-2):\penalty0 1--26, 2006.

\bibitem[Allen et~al.(2019)Allen, Berg, and Lane]{allen2019cryptodemocracy}
Darcy~WE Allen, Chris Berg, and Aaron~M Lane.
\newblock \emph{Cryptodemocracy: How Blockchain Can Radically Expand Democratic
  Choice}.
\newblock Rowman \& Littlefield, 2019.

\bibitem[Anshelevich and Postl(2017)]{anshelevich2017randomized}
Elliot Anshelevich and John Postl.
\newblock Randomized social choice functions under metric preferences.
\newblock \emph{Journal of Artificial Intelligence Research}, 58:\penalty0
  797--827, 2017.

\bibitem[Anshelevich et~al.(2018)Anshelevich, Bhardwaj, Elkind, Postl, and
  Skowron]{anshelevich2018approximating}
Elliot Anshelevich, Onkar Bhardwaj, Edith Elkind, John Postl, and Piotr
  Skowron.
\newblock Approximating optimal social choice under metric preferences.
\newblock \emph{Artificial Intelligence}, 264:\penalty0 27--51, 2018.

\bibitem[Austen-Smith and Banks(1988)]{austen1988elections}
David Austen-Smith and Jeffrey Banks.
\newblock Elections, coalitions, and legislative outcomes.
\newblock \emph{American Political Science Review}, 82\penalty0 (2):\penalty0
  405--422, 1988.

\bibitem[Basin et~al.(2018)Basin, Radomirovic, and Schmid]{basin2018alethea}
David Basin, Sa{\v{s}}a Radomirovic, and Lara Schmid.
\newblock Alethea: A provably secure random sample voting protocol.
\newblock In \emph{2018 IEEE 31st Computer Security Foundations Symposium
  (CSF)}, pages 283--297. IEEE, 2018.

\bibitem[Blanchard(2019)]{blanchard2019usability}
Enka Blanchard.
\newblock \emph{Usability: low tech, high security}.
\newblock PhD thesis, Universit{\'e} Sorbonne Paris Cit{\'e}, 2019.

\bibitem[Border and Jordan(1983)]{border1983straightforward}
Kim~C Border and James~S Jordan.
\newblock Straightforward elections, unanimity and phantom voters.
\newblock \emph{The Review of Economic Studies}, 50\penalty0 (1):\penalty0
  153--170, 1983.

\bibitem[Brill(2019)]{brill2019interactive}
Markus Brill.
\newblock Interactive democracy: New challenges for social choice theory.
\newblock In \emph{The Future of Economic Design}, pages 59--66. Springer,
  2019.

\bibitem[Chaum(2016)]{chaum2016random}
David Chaum.
\newblock Random-sample voting.
\newblock \emph{White Paper}, 2016.

\bibitem[Cheng et~al.(2017)Cheng, Dughmi, and Kempe]{cheng2017people}
Yu~Cheng, Shaddin Dughmi, and David Kempe.
\newblock Of the people: voting is more effective with representative
  candidates.
\newblock In \emph{Proceedings of the 2017 ACM Conference on Economics and
  Computation}, pages 305--322, 2017.

\bibitem[Cheng et~al.(2018)Cheng, Dughmi, and Kempe]{cheng2018distortion}
Yu~Cheng, Shaddin Dughmi, and David Kempe.
\newblock On the distortion of voting with multiple representative candidates.
\newblock In \emph{Thirty-Second AAAI Conference on Artificial Intelligence},
  2018.

\bibitem[Cohensius et~al.(2017)Cohensius, Mannor, Meir, Meirom, and
  Orda]{cohensius2017proxy}
Gal Cohensius, Shie Mannor, Reshef Meir, Eli Meirom, and Ariel Orda.
\newblock Proxy voting for better outcomes.
\newblock In \emph{Proceedings of the 16th Conference on Autonomous Agents and
  MultiAgent Systems}, pages 858--866. International Foundation for Autonomous
  Agents and Multiagent Systems, 2017.

\bibitem[Dalton et~al.(2001)Dalton, Burklin, and Drummond]{dalton2001public}
Russell~J Dalton, Wilhelm~P Burklin, and Andrew Drummond.
\newblock Public opinion and direct democracy.
\newblock \emph{Journal of Democracy}, 12\penalty0 (4):\penalty0 141--153,
  2001.

\bibitem[Dowlen(2017)]{dowlen2017political}
Oliver Dowlen.
\newblock \emph{The political potential of sortition: A study of the random
  selection of citizens for public office}, volume~4.
\newblock Andrews UK Limited, 2017.

\bibitem[Faliszewski et~al.(2017)Faliszewski, Skowron, Slinko, and
  Talmon]{faliszewski2017multiwinner}
Piotr Faliszewski, Piotr Skowron, Arkadii Slinko, and Nimrod Talmon.
\newblock Multiwinner voting: A new challenge for social choice theory.
\newblock \emph{Trends in computational social choice}, 74:\penalty0 27--47,
  2017.

\bibitem[Feller(2008)]{feller2008introduction}
Willliam Feller.
\newblock \emph{An introduction to probability theory and its applications},
  volume~2.
\newblock John Wiley \& Sons, 2008.

\bibitem[Fishkin et~al.(2018)Fishkin, Senges, Donahoe, Diamond, and
  Siu]{fishkin2018deliberative}
James~S Fishkin, Max Senges, Eileen Donahoe, Larry Diamond, and Alice Siu.
\newblock Deliberative polling for multistakeholder internet governance:
  considered judgments on access for the next billion.
\newblock \emph{Information, Communication \& Society}, 21\penalty0
  (11):\penalty0 1541--1554, 2018.

\bibitem[Gersbach et~al.(2017)Gersbach, Mamageishvili, and
  Tejada]{gersbach2017sophisticated}
Hans Gersbach, Akaki Mamageishvili, and Oriol Tejada.
\newblock Sophisticated attacks on decoy ballots: The devil's menu and the
  market for lemons.
\newblock \emph{arXiv preprint arXiv:1712.05477}, 2017.

\bibitem[Goel et~al.(2017)Goel, Krishnaswamy, and Munagala]{goel2017metric}
Ashish Goel, Anilesh~K Krishnaswamy, and Kamesh Munagala.
\newblock Metric distortion of social choice rules: Lower bounds and fairness
  properties.
\newblock In \emph{Proceedings of the 2017 ACM Conference on Economics and
  Computation}, pages 287--304, 2017.

\bibitem[Goelz et~al.(2018)Goelz, Kahng, Mackenzie, and
  Procaccia]{goelz2018fluid}
Paul Goelz, Anson Kahng, Simon Mackenzie, and Ariel~D Procaccia.
\newblock The fluid dynamic of liquid democracy.
\newblock \emph{WINE 2018}, 2018.

\bibitem[Green-Armytage(2015)]{green2015direct}
James Green-Armytage.
\newblock Direct voting and proxy voting.
\newblock \emph{Constitutional Political Economy}, 26\penalty0 (2):\penalty0
  190--220, 2015.

\bibitem[Guerrero(2014)]{guerrero2014against}
Alexander~A Guerrero.
\newblock Against elections: The lottocratic alternative.
\newblock \emph{Philosophy \& Public Affairs}, 42\penalty0 (2):\penalty0
  135--178, 2014.

\bibitem[Hansen(1999)]{hansen1999athenian}
Mogens~Herman Hansen.
\newblock \emph{The Athenian democracy in the age of Demosthenes: structure,
  principles, and ideology}.
\newblock University of Oklahoma Press, 1999.

\bibitem[Hoeffding(1994)]{hoeffding1994probability}
Wassily Hoeffding.
\newblock Probability inequalities for sums of bounded random variables.
\newblock In \emph{The Collected Works of Wassily Hoeffding}, pages 409--426.
  Springer, 1994.

\bibitem[Jamroga et~al.(2019)Jamroga, Roenne, Ryan, and Stark]{jamroga2019risk}
Wojciech Jamroga, Peter~B Roenne, Peter~YA Ryan, and Philip~B Stark.
\newblock Risk-limiting tallies.
\newblock In \emph{International Joint Conference on Electronic Voting}, pages
  183--199. Springer, 2019.

\bibitem[Kahng et~al.(2018)Kahng, Mackenzie, and Procaccia]{kahng2018liquid}
Anson Kahng, Simon Mackenzie, and Ariel~D Procaccia.
\newblock Liquid democracy: An algorithmic perspective.
\newblock In \emph{Thirty-Second AAAI Conference on Artificial Intelligence},
  2018.

\bibitem[Lang and Xia(2016)]{lang2016voting}
J{\'e}r{\^o}me Lang and Lirong Xia.
\newblock Voting in combinatorial domains., 2016.

\bibitem[Lenstra and Wesolowski(2015)]{lenstra2015random}
Arjen~K Lenstra and Benjamin Wesolowski.
\newblock A random zoo: sloth, unicorn, and trx.
\newblock \emph{IACR Cryptology ePrint Archive}, 2015:\penalty0 366, 2015.

\bibitem[Meir et~al.(2012)Meir, Procaccia, and Rosenschein]{meir2012algorithms}
Reshef Meir, Ariel~D Procaccia, and Jeffrey~S Rosenschein.
\newblock Algorithms for strategyproof classification.
\newblock \emph{Artificial Intelligence}, 186:\penalty0 123--156, 2012.

\bibitem[Mueller et~al.(1972)Mueller, Tollison, and Willett]{mueller1972}
Dennis~C Mueller, Robert~D Tollison, and Thomas~D Willett.
\newblock Representative democracy via random selection.
\newblock \emph{Public Choice}, \penalty0 (1):\penalty0 57--68, 1972.

\bibitem[Parkes et~al.(2017)Parkes, Tylkin, and Xia]{parkes2017thwarting}
David~C Parkes, Paul Tylkin, and Lirong Xia.
\newblock Thwarting vote buying through decoy ballots.
\newblock In \emph{International Conference on Autonomous Agents and Multiagent
  Systems}, pages 45--66. Springer, 2017.

\bibitem[Paroush(1998)]{paroush1998stay}
Jacob Paroush.
\newblock Stay away from fair coins: A condorcet jury theorem.
\newblock \emph{Social Choice and Welfare}, pages 15--20, 1998.

\bibitem[Pauly and Van~Hees(2006)]{pauly2006logical}
Marc Pauly and Martin Van~Hees.
\newblock Logical constraints on judgement aggregation.
\newblock \emph{Journal of Philosophical logic}, 35\penalty0 (6):\penalty0
  569--585, 2006.

\bibitem[Pivato and Soh(2020)]{pivato2020weighted}
Marcus Pivato and Arnold Soh.
\newblock Weighted representative democracy.
\newblock \emph{Journal of Mathematical Economics}, 2020.

\bibitem[Procaccia and Rosenschein(2006)]{procaccia2006distortion}
Ariel~D Procaccia and Jeffrey~S Rosenschein.
\newblock The distortion of cardinal preferences in voting.
\newblock In \emph{International Workshop on Cooperative Information Agents},
  pages 317--331. Springer, 2006.

\bibitem[Procaccia and Tennenholtz(2009)]{procaccia2009approximate}
Ariel~D Procaccia and Moshe Tennenholtz.
\newblock Approximate mechanism design without money.
\newblock In \emph{Proceedings of the 10th ACM conference on Electronic
  commerce}, pages 177--186, 2009.

\bibitem[Shevtsova(2011)]{shevtsova2011absolute}
Irina Shevtsova.
\newblock On the absolute constants in the berry-esseen type inequalities for
  identically distributed summands.
\newblock \emph{arXiv preprint arXiv:1111.6554}, 2011.

\bibitem[Skowron(2015)]{skowron2015we}
Piotr~Krzysztof Skowron.
\newblock What do we elect committees for? a voting committee model for
  multi-winner rules.
\newblock In \emph{Twenty-Fourth International Joint Conference on Artificial
  Intelligence}, 2015.

\end{thebibliography}

\appendix

\section{Proofs for Section~\ref{sect_uniform_majority}}\label{app_proofs_unif_maj}

\begin{proof}[Proof of Lemma~\ref{lm_one_issue_gives_an_upper_bound}]
In order to show that $\AR_{n,m}\big(\kMAJ{k}\big)\leq \AR_{n,1}\big(\kMAJ{k}\big)$, consider a worst-case profile $X^*\in \{0,1\}^{n\times 1}$ for one issue. Define a new profile $X$ with $m$ issues by cloning $X^*$: $x_{i,j}=x^*_i$. Since different issues contribute to the Hamming distance in an additive way and $\kMAJ{k}$ operates on each issue separately,
\begin{equation}\label{eq_lm1_proof_1}
\AR_{n,m}\big(\kMAJ{k}\big)\geq \frac{\SC\big(\kMAJ{k}(X)\big)}{\SC\big(z^\opt(X) \big)}= \frac{m\cdot \SC\big(\kMAJ{k}(X^*)\big)}{m\cdot\SC\big(z^\opt(X^*) \big)}=\AR_{n,1}\big(\kMAJ{k}\big).
\end{equation}
To prove the opposite inequality, pick a worst-case profile $Y^*\in \{0,1\}^{n\times m}$ with $m$ issues and denote by $Y^j$ a one-issue profile obtained by the restriction of $Y$ to issue $j$, i.e., $y^j_i=y^*_{i,j}$. 
The following computation shows that the worst-case profile with $m$ issues is not worse than its restriction to the worst issue
$$\AR_{n,m}\big(\kMAJ{k}\big)=\frac{\SC\big(\kMAJ{k}(Y^*)\big)}{\SC\big(z^\opt(Y^*) \big)}=\frac{\sum_{j\in [m]}\SC\big(\kMAJ{k}(Y^j)\big)}{\sum_{j\in [m]}\SC\big(z^\opt(Y^j) \big)}\leq$$
\begin{equation}\label{eq_lm1_proof_2}
\leq \max_{j\in [m]}\frac{\SC\big(\kMAJ{k}(Y^j)\big)}{\SC\big(z^\opt(Y^j) \big)}\leq \AR_{n,1}\big(\kMAJ{k}\big).
\end{equation}
Combining inequalities~\eqref{eq_lm1_proof_1} and~\eqref{eq_lm1_proof_2}, we obtain the desired equality of approximation ratios.
\end{proof}

\begin{proof}[Proof of Theorem~\ref{th_approx_ratio_arbitrary_k}]
By Lemma~\ref{lm_one_issue_gives_an_upper_bound}, it is enough to consider a one-issue profile $X$. 
Without loss of generality, we assume that the majority supports the alternative $0$:   $1\leq n_1\leq \frac{n}{2}$ voters prefer the alternative $1$, and the rest prefer $0$. Denote $p=\frac{n_1}{n}\in \left(0,\frac{1}{2}\right]$. 

Let $\xi$ be the number of committee members supporting the alternative $1$. The random variable $\xi$ has the hypergeometric distribution with parameters $(n,n_1=pn,k)$. Recall that the hypergeometric distribution describes the number of red balls among $k$ draws without replacement from an urn with $n_1$ red and $n-n_1$ white balls. If we represent supporters of $0$ as white balls and supporters of $1$ as red and take into account that sampling without replacement is equivalent to picking a uniformly random subset, it becomes obvious that $\xi$ is hypergeometric.

If $\xi<\frac{k}{2}$, then the committee selects the socially-optimal alternative $0$ with the social cost $p\cdot n$; for $\xi=\frac{k}{2}$ (can happen only if $k$ is even), there is a tie and the committee picks any of the two alternatives equally likely; for $\xi>\frac{k}{2}$, the committee selects the sub-optimal alternative $1$ with the social cost $(1-p)\cdot n$. Therefore, the ratio of social costs  in~\eqref{eq_def_AR} can be represented as
\begin{equation}\label{eq_AR_representation}
    \frac{\SC\big(\kMAJ{k}(X)\big)}{\SC\big(z^\opt\big)}= \frac{\P\left(\xi<\frac{k}{2}\right)\cdot p\cdot n + \P\left(\xi=\frac{k}{2}\right)\cdot \frac{n}{2}+ \P\left(\xi>\frac{k}{2}\right) (1-p)\cdot n}{p\cdot n}\leq
\end{equation}
$$ \leq     1+\P\left(\xi\geq \frac{k}{2}\right)\frac{1-2p}{p}.$$
 We will consider the two cases of $p$ close to zero and $p$ close to $\frac{1}{2}$ separately. As we will see, the worst-case instances correspond to the latter scenario, however, $p$ in the denominator of \eqref{eq_AR_representation} does not allow to cover both cases at once.
 
  
 \paragraph{The case $p\leq \frac{1}{6}$.} For small $p$, we use a rough bound on $\P(\xi\geq \frac{k}{2})$ based on the Chebyshev inequality.

The expectation and the variance of a hypergeometric random variable are given by 
$$\E \xi=p\cdot k\ \  \ \mbox{and}  \ \ \  \V \xi= p(1-p)\frac{n-k}{n-1}\cdot k\leq p(1-p)\cdot k,$$
see e.g. \citep[\S 2.6]{feller2008introduction}.\fsy{check the reference}
By the Chebyshev inequality,
$$\P\left(\xi\geq\frac{k}{2}\right)\leq \frac{\V \xi}{\left(\frac{k}{2}-p\cdot k\cdot \E \xi\right)^2}\leq \frac{4p(1-p)}{(1-2p)^2}\frac{1}{k}.$$
Thus by~\eqref{eq_AR_representation},
\begin{equation}\label{eq_AR_p_close_to_0}
\frac{\SC\big(\kMAJ{k}(X)\big)}{\SC\big(z^\opt\big)}\leq 1+\frac{4(1-p)}{1-2p}\frac{1}{k}\leq1+\frac{5}{k}. 
\end{equation}

\paragraph{The case $\frac{1}{6}\leq p\leq \frac{1}{2}$.} To estimate the probability of $\xi\geq \frac{k}{2}$, we use the tail bounds for hypergeometric random variables. They enjoy the usual Hoeffding inequality
\begin{equation}\label{eq_Hoeffding}
\P\left(\xi\geq \E \xi+t\right)\leq \exp\left(-2\cdot \frac{t^2}{k} \right) \ \ \mbox{for} \ \ t\geq 0
\end{equation}
as if $\xi$ was the sum of $k$ Bernoulli random variables with success probability $p$.
The inequality for hypergeometric distribution is proved in the same paper of \citet{hoeffding1994probability}, see Section~5 there. 

From~\eqref{eq_Hoeffding}, we get $\P\left(\xi\geq \frac{k}{2}\right)\leq \exp\left(-\frac{1}{2}(1-2p)^2\cdot k\right)$. Substituting this bound in~\eqref{eq_AR_representation}, replacing $p$ in denominator by its lower-bound $\frac{1}{6}$, and  denoting $1-2p$ by $t$, we obtain
$$\frac{\SC\big(\kMAJ{k}(X)\big)}{\SC\big(z^\opt\big)}\leq 1+\exp\left(-\frac{1}{2}(1-2p)^2\cdot k\right)\frac{1-2p}{p}\leq 1+ 6\max_{t\in\left[0,\frac{1}{2}\right]} \exp\left(-t^2\cdot k\right)\cdot t.$$
Logarithmic derivative $-t\cdot k+\frac{1}{t}$ is zero at $t=\frac{1}{\sqrt{k}}$. 

Hence,
\begin{equation}\label{eq_AR_p_close_to_0.5}
\frac{\SC\big(\kMAJ{k}(X)\big)}{\SC\big(z^\opt\big)}\leq 1+ \frac{6\exp\left(-\frac{1}{2}\right)}{\sqrt{k}}.
\end{equation}

\paragraph{Gluing the two cases.} For $k\geq 2$, the upper bound~\eqref{eq_AR_p_close_to_0.5} proved for $\frac{1}{6}\leq p\leq \frac{1}{2}$ exceeds~\eqref{eq_AR_p_close_to_0} derived for $p\leq \frac{1}{6}$. Indeed, 
$$\frac{5}{k}\leq \frac{6\exp\left(-\frac{1}{2}\right)}{\sqrt{k}}\Longleftrightarrow k\geq \frac{25}{36}\cdot e\approx 1.9.$$
Therefore,~\eqref{eq_AR_p_close_to_0.5} bounds the  ratio of the social costs for any $p$ and $k\geq 2$.  For $k=1$, the right-hand side of~\eqref{eq_AR_p_close_to_0.5} is equal to $1+6\exp\left(-\frac{1}{2}\right)\approx 4.6$, while the worst-case approximation ratio for $\kMAJ{1}$ (a random dictator) is below $2$ by Example~\ref{ex_RD}. Thus the approximation ratio satisfies 
$$\AR_{n,1}\big(\kMAJ{k}\big)=\max_X \frac{\SC\big(\kMAJ{k}(X)\big)}{\SC\big(z^\opt\big)}\leq 1+ \frac{6\exp\left(-\frac{1}{2}\right)}{\sqrt{k}} $$
for all $n$ and $k$.

\paragraph{Asymptotic tightness.}
 We will consider the scenario where $2 (k+1)^2\leq n$, i.e., the number of voters $n$ is large compared to the committee size $k$.
 
From~\eqref{eq_AR_representation}, we deduce the lower bound: 
\begin{equation}\label{eq_AR_lower_bound}
    \frac{\SC\big(\kMAJ{k}(X)\big)}{\SC\big(z^\opt\big)}\geq 1+\P\left(\xi>\frac{k}{2}\right)\frac{1-2p}{p}.
\end{equation}
Let's estimate $\P\left(\xi>\frac{k}{2}\right)$ from below for the hypergeometric distribution with parameters $(n,n_1=pn,k)$. If $k'\leq k $ balls are already taken and $n_1'\leq k'$ of them are red, the chance to pick the next red ball is $\frac{n_1-n_1'}{n-k'}\geq \frac{n_1-k}{n}=p-\frac{k}{n}$. Therefore, $\P\left(\xi>\frac{k}{2}\right)$ is bounded from below by the probability $\P\left(\sum_{i=1}^k \eta_i>\frac{k}{2}\right)$, where $\eta_i$ are i.i.d. Bernoulli random variables with success probability $p-\frac{k}{n}$. 

For i.i.d. random variables, the probability $\P\left(\sum_{i=1}^k \eta_i\leq t\right)$ can be approximated by the normal law using the Berry–Esseen theorem (a version of the central limit theorem with a bound on approximation error \citep{shevtsova2011absolute})
$$\left|\P\left(\sum_{i=1}^k \eta_i \leq t\right)- \Phi\left(\sqrt{k}\cdot\frac{\frac{t}{k}-\E \eta_1}{\sqrt{\V \eta_1}}\right)\right|\leq \frac{1}{2}\frac{\E \big|\eta_1-\E \eta_1\big|^3 }{\left(\V \eta_1\right)^\frac{3}{2}}\leq \frac{1}{2\sqrt{k}},$$
where $\Phi(z)=\frac{1}{\sqrt{2\pi}}\int_{-\infty}^z \exp\left(-\frac{y^2}{2}\right)\, dy$.
Thus 
$$\P\left(\xi>\frac{k}{2}\right)\geq \P\left(\sum_{i=1}^k \eta_i\leq t\right)\geq 1-\Phi\left(\sqrt{k}\left(1-2p+\frac{2k}{n}\right)\right)-\frac{1}{2\sqrt{k}},$$
where we wrote $\frac{1}{4}$ instead of $\V \eta_1\leq \frac{1}{4}$.

Let the number of supporters for the suboptimal alternative be $$n_1=\Big\lceil\left(\frac{1}{2}-\frac{1}{2\sqrt{k}}+\frac{k}{n}\right)\cdot n\Big\rceil$$ 
and hence $\frac{1}{2}-\frac{1}{2\sqrt{k}}+\frac{k}{n}\leq p<\frac{1}{2}-\frac{1}{2\sqrt{k}}+\frac{k}{n}+\frac{1}{n}.$ Note that $p\in\left(0,\frac{1}{2}\right)$ by the assumption  on $n$ and $k$. We get
$$\P\left(\xi>\frac{k}{2}\right)\geq 1-\Phi(1)-\frac{1}{2\sqrt{k}}=\Phi(-1)-\frac{1}{2\sqrt{k}}.$$  
Substituting this into~\eqref{eq_AR_lower_bound}, we obtain
\begin{eqnarray}
\notag \frac{\SC\big(\kMAJ{k}(X)\big)}{\SC\big(z^\opt\big)}\geq& 1+2\left(\Phi(-1)-\frac{1}{2\sqrt{k}}\right)\left(\frac{1}{\sqrt{k}}-\frac{2(k+1)}{n}\right)\geq\\
\notag \geq& 1+\frac{2\Phi(-1)}{\sqrt{k}}-\frac{1}{k}-{2\Phi(-1)}\cdot\frac{2(k+1)}{n}.
\end{eqnarray}
Inequalities $\Phi(-1)\leq \frac{1}{2} $ and $\frac{2(k+1)}{n}\leq \frac{1}{k}$ lead to the lower bound on the approximation ratio
$$\AR_{n,1}\big(\kMAJ{k}\big)=\max_X \frac{\SC\big(\kMAJ{k}(X)\big)}{\SC\big(z^\opt\big)}\geq 1+\frac{2\Phi(-1)}{\sqrt{k}}-\frac{2}{k}$$
valid for $2(k+1)^2\leq n$. 
\end{proof}

\section{Proofs for Section~\ref{sec_delegation}}\label{app_proofs_delegation}

\begin{proof}[Proof of Proposition~\ref{prop_delegation_small_m}]
Consider a preference profile $X$ with i.i.d. issues. Without loss of generality, for every issue $j\in [m]$, the majority of voters prefers the alternative $0$ but not unanimously. We denote by $p$ the fraction of supporters of the alternative $1$, so $0< p\leq \frac{1}{2}$.

Consider first the case of small $p<\frac{1}{2m}$, which will not require the independence assumption. By the union bound, at least $1-pm>\frac{1}{2}$ fraction of all voters prefer the alternative $0$ on every issue. Therefore, if the committee $C$ contains a ``zero'' member  that prefers $0$ for every issue, he receives more than $1/2$ of the total weight and the alternative $0$ wins for all issues thus providing the socially optimal outcome.

The probability that a random committee contains a zero member is at least $1-(pm)^k$, therefore, the ratio of social costs satisfies
$$\frac{\SC\big(\kREP{k}\big)(X)}{\SC(z^\opt)}\leq\frac{\left(1-(pm)^k\right)\cdot p +(pm)^k\cdot (1-p)}{p}= $$
$$=1+ m\cdot (pm)^{k-1}(1-2p)\leq 1+ \frac{m}{2^{k-1}}.$$

Now consider the case of $\frac{1}{2m}\leq p\leq \frac{1}{2}$. Let's estimate that probability of the event $A$ that for every sequence  $z\in \{0,1\}^m$ there is a committee member $c\in C$ with $x_c=z$. Given $A$, every voter delegates his vote to a committee member with exactly the same preferences as his own and thus the outcome of the committee vote coincides with the outcome of the majority vote of the whole population. Thus, if $A$ occurs, the committee selects the socially-optimal outcome.

We estimate the probability of $A$ using the union bound. For any $z\in \{0,1\}^m$ with $q$ ones, the probability that there is no committee member with such preferences is is at most $\left(1-p^q(1-p)^{m-q}\right)^k$. Since there are ${\footnotesize\begin{pmatrix}m\\q\end{pmatrix}}$ such vectors $z$, we get
$$\P(A)\geq 1-\sum_{q=0}^m \begin{pmatrix}m\\q\end{pmatrix} \left(1-p^q(1-p)^{m-q}\right)^k.$$
Taking into account that  $\left(1-p^q(1-p)^{m-q}\right)^k\leq \left(1-p^m\right)^k$ and $\sum_{q=0}^m {\footnotesize\begin{pmatrix}m\\q\end{pmatrix}}=2^m$, we obtain
$$\P(A)\geq 1-2^m\left(1-p^m\right)^k.$$
Therefore
$$\frac{\SC\big(\kREP{k}\big)(X)}{\SC(z^\opt)}\leq \frac{\P(A)\cdot p + (1-\P(A))\cdot (1-p)}{p}\leq 1+ \frac{2^m}{p}\left(1-p^m\right)^k.$$
Since $\frac{1}{2m}\leq p$, the right-hand side  does not exceed
$$1+m\cdot 2^{m+1}\left(1-\frac{1}{(2m)^m}\right)^k\leq 1+m\cdot 2^{m+1}\exp\left(-\frac{k}{(2m)^m}\right),$$
where in the last inequality we used that $\left(1-t\right)^{\frac{1}{t}}\leq \frac{1}{e}$ for any $t\in(0,1)$.

Note that for any $m,k\geq 1$, this upper bound exceeds the one obtained for $p<\frac{1}{2m}$ and thus
$$\AR_{n,m}\big(\kREP{k}\big)\leq 1+m\cdot 2^{m+1}\exp\left(-\frac{k}{(2m)^m}\right).$$
\end{proof}

\begin{proof}[Proof of Proposition~\ref{prop_delegation_is_bad}]
Consider a random preference profile $X$ with
$n$ voters  and $m$ issues. There are two groups of voters $P$ and $Q$ with sizes $\alpha\cdot n$ and $(1-\alpha)\cdot n$, respectively. All the voters $i\in Q$ are unanimous: their preferences $x_i$ coincide with a vector $\xi=(\xi_j)_{j\in [m]}$ with components given by i.i.d. Bernoulli random variables with success probability $q<\frac{1}{2}$. In contrast, preferences of voters in $P$ are not aligned: $(x_{i,j})_{i\in P, j\in [m]}$ is a matrix with i.i.d. Bernoulli entries with success probability $p\in \left(q,\frac{1}{2}\right)$.

The expected distance between two distinct voters $i,i'\in P$ is equal to $\E [d(x_i,x_{i'})]=2p(1-p)m$. Similarly, for a voter $i\in P$ and $i'\in Q$ we have  $\E[d(x_i,x_{i'})]=(p(1-q)+q(1-p))m$. Expected distance to the zero vector $\E[d(x_i,0)]$ is equal to $pm$ for $i\in P$ and $qm$ for $i\in Q$.

By the law of large numbers, for any $\varepsilon>0$ we can find $m=m(\varepsilon,n,p,q)$ such that all the distances are within a factor $1\pm\varepsilon $ from their expected values with positive probability: 
$$(1-\varepsilon)\E[d(x_i,z)]\leq d(x_i,z)\leq (1+\varepsilon)\E[d(x_i,z)],\quad i\in [n],\ z\in \{x_{i'}\,:\, i'\in [n]\}\cup\{0\}.$$ 
Abusing the notation, we denote by $X$ the \emph{realization} of the random profile satisfying these inequalities.

Assuming that $\varepsilon$ is small, we see 
that the distance between any two voters from $P$ exceeds the distance between voters from $P$ and voters from $Q$ (it is enough to assume that $(1-\varepsilon)2p(1-p)m>(1+\varepsilon)(p(1-q)+q(1-p))m$).

Consider a random committee $C$ of size $k$.  If $C$ contains  some voters from $Q$, all voters from $[n]\setminus C$ delegate their weight to them. Therefore, if $n-k>\frac{k}{2}$ (i.e., $n$ is large enough compared to $k$), voters from $C\cap Q$ get more than half of the total weight and, therefore, their unanimous preferences $\xi$ coincide with the outcome of the committee vote. 

The chance that $C\cap Q$ is nonempty equals $1-{\tiny\frac{\begin{pmatrix}\alpha n\\ k\end{pmatrix}}{\begin{pmatrix}n\\ k\end{pmatrix}}}\geq 1-\alpha^k.$  Putting all the pieces together, we obtain the following lower bound on the approximation ratio
$$\AR\big(\kREP{k}\big)\geq\frac{(1-\alpha^k)\min_{i\in P, i'\in Q} d(x_i,x_{i'})\cdot \alpha n+ \alpha^k\cdot \SC(z^\opt) }{\SC(z^\opt)}.$$
Taking into account that $\SC(z^\opt)\leq \SC(0)$ and using the lower bound on the minimal distance, we get
$$\AR\big(\kREP{k}\big)\geq \alpha^k + (1-\alpha^k)\cdot \frac{(1-\varepsilon)\cdot \big(p(1-q)+q(1-p)\big)\cdot \alpha}{(1+\varepsilon)\cdot \big(\alpha p+(1-\alpha)q)\big)}.$$
Letting $\varepsilon\to +0$ and $q\to p-0$, we get rid of $\varepsilon$ and simplify the expression
$$\AR\big(\kREP{k}\big)\geq \alpha^k + (1-\alpha^k)\cdot 2(1-p)\alpha.$$
Now, letting $p$ go to $0$, we see that 
$$\AR\big(\kREP{k}\big)\geq 2\alpha+\alpha^k -2\alpha^{k+1}.$$
For $k=1$, the maximum of the right-hand side is achieved at $\alpha=\frac{3}{4}$. Substituting this value for any $k$, we obtain
$$\AR\big(\kREP{k}\big)\geq \frac{3}{2}-\frac{1}{2}\left(\frac{3}{4}\right)^k\geq \frac{9}{8}.$$
\end{proof}

\section{Proofs for Section~\ref{sec_general}}\label{app_proofs_general}

\begin{proof}[Proof for Lemma~\ref{lm_anonymization}]
Pick a profile $X$ with $n$ voters and $m$ issues and denote by $X^*$ the  profile of the form $\pi(X)$ maximizing $\SC\big(h(\pi(X))\big)$ over all permutations $\pi\in S_n$. Note that the socially-optimal cost is the same for $X$ and $X^*$. Therefore,
$$\frac{\SC\big(h^\sym(X)\big)}{\SC\big(z^\opt(X)\big)}=\frac{1}{n!}\sum_{\pi\in S_n} \frac{\SC\big(h(\pi(X))\big)}{\SC\big(z^\opt(X)\big)}\leq \frac{\SC\big(h(X^*)\big)}{\SC\big(z^\opt(X^*)\big)}\leq \AR_{n,m}\big(h\big).$$
By maximizing the left-hand side over $X$, we obtain the desired inequality.
\end{proof}

\begin{proof}[Proof of Lemma~\ref{lm_equidistant}]
Identify the set of issues with the set $S_n$ of all permutations  $\pi$ of  $[n]$. So there are $m=n!$ issues. Fix a set $A\subset [n]$ of cardinality $n_1$.
The set of voters preferring alternative $1$ for issue $\pi$ is defined to be the preimage $\pi^{-1}(A)$ of the set $A$ under permutation $\pi$. In other words,
preference of a voter $i$ on  an issue $\pi$ is given by
$$x_{i,\pi}=\left\{\begin{array}{cc}
    1, &  \pi(i)\in A \\
    0, &  \pi(i)\notin A.
\end{array}\right.$$
By the construction, exactly $n_1$ voter prefers $1$ for each issue. 

Consider the distance $d(x_i,x_{i'})=\sum_{\pi\in S_n }|x_{i,\pi}-x_{i',\pi}|$ between two distinct voters $i,i'\in[n]$. Let $\tau$ be the permutation such that $\tau(i)=1$
and $\tau(i')=2$. Taking into account that $\pi\to\pi\circ\tau$ defines an automorphism of $S_n$ and that $x_{i,\pi\circ\tau}=x_{\tau(i),\pi}$, we get 
$$d(x_i,x_{i'})=\sum_{{\pi\circ \tau}\in S_n }\big|x_{i,{\pi\circ\tau}}-x_{i',{\pi\circ\tau}}\big|=\sum_{{\pi}\in S_n }\big|x_{1,{\pi}}-x_{2,{\pi}}\big|=d(x_1,x_{2}).$$
Therefore, all the distances between distinct voters are equal to common constant $d=d(x_1,x_{2})$. To find $d$, note that if we take a pair of voters uniformly at random (not necessary distinct), the chance they have different opinion on a certain issue $\pi$ is $\frac{n_1(n-n_1)+(n-n_1)n_1}{n^2}$, while the chance that they are distinct is $\frac{n^2-n}{n^2}$. Thus
$$\frac{n^2-n}{n^2}\cdot d=\frac{n_1(n-n_1)+(n-n_1)n_1}{n^2}\quad\Longrightarrow\quad d=\frac{2p\left(1-p\right)}{1-\frac{1}{n}}.$$
\end{proof}

\begin{proof}[Proof of Theorem~\ref{theorem_general}]
Consider the anonymized rule $g^\sym$ and check that for $m=n!$ we have
\begin{equation}\label{eq_AR_MAJ_g_sym}
\AR_{n,m}\big(g^\sym\big)\geq \AR_{n,m}\big(\kMAJ{k}\big).
\end{equation}
By Lemma~\ref{lm_anonymization}, this  implies the same lower bound for $g$.

Pick $n_1\leq \frac{n}{2}$ such that the preference profile with one issue and $n_1$ supporters of the alternative $1$ is the worst profile for $\kMAJ{k}$. 

With this $n_1$, construct the profile $X$ from Lemma~\ref{lm_equidistant} and denote by $X'$ the profile, where all the voters flipped their preferences on all the issues, i.e., $x'_{i,j}=1-x_{i,j}$. Both $X$ and $X'$ are worst profiles for $\kMAJ{k}$ (see Lemma~\ref{lm_one_issue_gives_an_upper_bound}) and have the same socially-optimal cost $\SC(z^\opt(X))=\SC(z^\opt(X'))$. Therefore,
\begin{equation}\label{eq_AR_X_X_bar}
\AR_{n,m}\big(\kMAJ{k}\big)=\frac{1}{2}\frac{\SC\big(\kMAJ{k}(X)\big)+\SC\big(\kMAJ{k}(X')\big)}{\SC(z^\opt(X))}.  
\end{equation}
Our goal is to show that
\begin{equation}\label{eq_pairs_of_social_costs}
\SC\big(\kMAJ{k}(X)\big)+\SC\big(\kMAJ{k}(X')\big)\leq \SC\big(g^\sym(X)\big)+\SC\big(g^\sym(X')\big).
\end{equation}
Together with~\eqref{eq_AR_X_X_bar} this inequality would imply~\eqref{eq_AR_MAJ_g_sym} since
 $\AR_{n,m}\big(g^\sym\big)\geq \frac{\SC\big(g^\sym(Y)\big)}{\SC\big(z^\opt(Y)\big)}$ for any profile $Y$.

 Consider the distributions $g^\sym(X)$ and $g^\sym(X')$ on pairs $(C,h)$ of a committee with $k$ members and the inter-committee rule. The two distributions coincide because $g^\sym$ is distance-based and  matrices of distances for $X$ and $\bar{X}$ are the same; for definiteness, let's focus on $X$. 
 
 Since all pairs of voters in $X$ are equidistant, the only anonymous way to select a committee $C$ of size $k$ is to take it \emph{uniformly} at random among subsets $C\subset [n]$ with $|C|=k$. Indeed if two such subsets $C$ and $C'$ differ in probability, this contradicts anonymity because $C= \pi(C')$ for some permutation $\pi$ of $[n]$ and no permutation changes the matrix of distances.
 
 Consider the probability $\P(z_j=1\mid C)$ that the outcome of $g^\sym(X)$ for issue $j$ is $1$ conditional on the selected committee $C$. This probability can only depend on $(x_{i,j})_{i\in C}$ because $h$ is issue-wise.
 Since $g^\sym$ is distance-based and anonymous, and any permutation of voters (in particular, those in $C$)
 leaves the matrix of distances unchanged, we see that 
 the probability can only depend on $q=\sum_{i\in C} x_{i,j}$, i.e., the total number of supporters of the alternative $1$ for issue $j$. Hence, $\P(z_j=1\mid C)=h_{q,j}=h_{q,j}(C)$.
 
 For any alternative $j$, the chance that the uniformly random committee $C$ contains exactly $q$ supporters of the alternative $1$ is equal to ${\tiny\frac{\begin{pmatrix} n_1\\ q\end{pmatrix}\cdot\begin{pmatrix} n-n_1\\ k-q\end{pmatrix}}{\begin{pmatrix}n\\ k\end{pmatrix}}}$ for the profile $X$ and ${\tiny\frac{\begin{pmatrix} n-n_1\\ q\end{pmatrix}\cdot \begin{pmatrix} n\\ k-q\end{pmatrix}}{\begin{pmatrix}n\\ k\end{pmatrix}}}$ for $X'$. Selection of the alternative supported by majority contributes $n_1$ to the social cost for both profiles, while the sub-optimal costs $n-n_1$. This allows to rewrite the right-hand side of~\eqref{eq_pairs_of_social_costs}:
 $$\SC\big(g^\sym(X)\big)+\SC\big(g^\sym(X')\big)=$$
 $$=\sum_{j\in [m]}\sum_{q=0}^k\left({\tiny\frac{\begin{pmatrix} n_1\\ q\end{pmatrix}\cdot\begin{pmatrix} n-n_1\\ k-q\end{pmatrix}}{\begin{pmatrix}n\\ k\end{pmatrix}}}\big((n-n_1)h_{q,j}+n_1(1-h_{q,j})\big)+\right.$$
 $$+\left.{\tiny\frac{\begin{pmatrix} n-n_1\\ q\end{pmatrix}\cdot\begin{pmatrix} n_1\\ k-q\end{pmatrix}}{\begin{pmatrix}n\\ k\end{pmatrix}}}\big(n_1 h_{q,j}+(n-n_1)(1-h_{q,j})\big)\right).$$
 To minimize this expression over $h_{q,j}\in[0,1]$, one should minimize the contribution of the first summand whenever ${\tiny\frac{\begin{pmatrix} n_1\\ q\end{pmatrix}\cdot\begin{pmatrix} n-n_1\\ k-q\end{pmatrix}}{\begin{pmatrix}n\\ k\end{pmatrix}}}>{\tiny\frac{\begin{pmatrix} n-n_1\\ q\end{pmatrix}\cdot \begin{pmatrix} n\\ k-q\end{pmatrix}}{\begin{pmatrix}n\\ k\end{pmatrix}}}$ which leads to $h_{q,j}=0$; similarly, for the opposite strict inequality, it is optimal to minimize the second summand, which gives $h_{q,j}=1$. Therefore, the optimal $h_{q,j}$ equals $0$ if $q<\frac{k}{2}$ and $1$ if $q>\frac{k}{2}$. Such $h$ corresponds to the simple majority rule with arbitrary tie-breaking in case of a tie. Therefore,
  $$\SC\big(g^\sym(X)\big)+\SC\big(g^\sym(X')\big)\geq \SC\big(\kMAJ{k}(X)\big)+\SC\big(\kMAJ{k}(X')\big),$$
  which completes the proof.
\end{proof}

\end{document}